\newcommand\independent{\protect\mathpalette{\protect\independenT}{\perp}}
\def\independenT#1#2{\mathrel{\rlap{$#1#2$}\mkern2mu{#1#2}}}
\newtheorem{lemma}{Lemma}
\newtheorem{theorem}{Theorem}
\title[]{Conditional separable effects}
\author{Mats J. Stensrud$^{1}$,James M. Robins$^{2}$, Aaron Sarvet$^{2}$, Eric J. Tchetgen Tchetgen$^{3}$,   Jessica G. Young$^{2,4}$} \address{ \small $^1$ Department of Mathematics, Ecole Polytechnique Fédérale de Lausanne, Switzerland \\
$^2$ Department of Epidemiology, Harvard T. H. Chan School of Public Health, USA  \\
$^3$ Department of Statistics, The Wharton School, University of Pennsylvania, USA \\
$^4$ Department of Population Medicine,
Harvard Medical School, USA \\
}
\begin{document}
\maketitle

\begin{abstract}
Researchers are often interested in treatment effects on outcomes that are only defined conditional on a post-treatment event status. For example, in a study of the effect of different cancer treatments on quality of life at end of follow-up, the quality of life of individuals who die during the study is undefined. In these settings, a naive contrast of outcomes conditional on the post-treatment variable is not an average causal effect, even in a randomized experiment. Therefore the effect in the principal stratum of those who would have the same value of the post-treatment variable regardless of treatment, such as the always survivors in a truncation by death setting, is often advocated for causal inference. While this principal stratum effect is a well defined causal contrast, it is often hard to justify that it is relevant to scientists, patients or policy makers, and it cannot be identified without relying on unfalsifiable assumptions. Here we formulate alternative estimands, the \emph{conditional separable effects}, that have a natural causal interpretation under assumptions that can be falsified in a randomized experiment. A feature of the conditional separable effects is that the investigator must describe modified versions of the original  treatment, motivated by subject-matter knowledge. We provide identification results and introduce different estimators, including a doubly robust estimator derived from the nonparametric influence function. As an illustration, we estimate a conditional separable effect of chemotherapies on quality of life in patients with prostate cancer, using data from a randomized clinical trial.
\end{abstract}

\section{Introduction}
\label{sec: intro}
Many research questions involve treatment effects on outcomes that are only defined conditional on a post-treatment event status. For example, in a study of the effects of different cancer treatments on quality of life at end of follow-up, the quality of life of individuals who die during the study is undefined. Furthermore, some treatment effects are only of substantive interest conditional on a post-treatment event. In a study of the effect of a vaccine on viral load at end of follow-up, the viral load of individuals who never become infected is not of substantive interest (even though it is defined).

By design of a randomized trial, we can identify a counterfactual contrast of mean outcomes across treatment arms conditional on a post-treatment status (when there are no losses to follow-up). However, this contrast does not in general equal a causal effect when the treatment affects the post-treatment event.  In this case, outcomes are being compared in different sets of individuals, and thus the comparison cannot be interpreted as a contrast of (counterfactual) outcomes in the same set of individuals under different treatment conditions.  For example, when the cancer treatments affect survival, the subset of the population who would survive under one treatment will be different from the subset who would survive under the other treatment.  

Selecting a meaningful definition of a causal effect in this setting is not straightforward. One option is to consider a so-called \textsl{controlled direct effect} \cite{robins1992identifiability}, which quantifies the effect of the treatment on the outcome had we (somehow) eliminated post-treatment events that render the outcome undefined or not of substantive interest.  However, such effects often do not quantify effects of interest to an investigator, policy maker, doctor or patient; for example, the utility of the effect of a cancer treatment on quality of life had we eliminated death, or the effect of a vaccine on viral load had we forced everyone to become infected, is not clear.

Given the limitations of the controlled direct effect, Robins \cite{robins1986new} introduced the \textsl{principal stratum effects} in settings where the outcome of interest is only defined, or of substantive interest, conditional on a post-treatment event status; in particular,  the causal effect in the subset of individuals who would have this event status, regardless of the treatment they were given. The name "principal stratum effect" is due to Frangakis and Rubin \cite{frangakis2002principal} who, in contrast to Robins's \cite{robins1986new} more skeptical view, advocated strongly for the use of this estimand. Indeed, it has been argued that no other sensible causal estimand exists in this case \cite{ding2017principal, vanderweele2011principal, rubin2006causal}. However, principal stratum effects also have several serious limitations \cite{robins1986new, robins2007principal, joffe2011principal, dawid2012imagine, robins2010alternative,robins2020interventionist}. It is impossible to observe who comprises this subset of the population because it is defined by the event status in the same individual under different treatments.  Further, this subset may not actually exist and, if it does exist, may constitute a highly unusual subset of the original population.  Finally, identification of a principal stratum effect generally relies on assumptions that cannot be falsified in any real-world experiment. 

Here we provide new definitions of causal effects for settings where the outcome is only defined or of substantive interest conditional on a particular post-treatment event status: the \emph{conditional separable effects}.  We give general conditions under which these effects can be identified along with various estimators.  These estimands are inspired by Robins and Richardson's treatment decomposition in the context of mediation \cite{robins2010alternative} and the (marginal) separable effects for competing events settings \cite{stensrud2019separable, stensrud2019generalized}. However, the problem of identifying the conditional separable effects is non-trivially distinct from the problem of identifying their marginal counterparts, leading to distinct identifying functionals from those considered in this previous work.  In turn, the estimation problem here is distinct and the estimators we develop do not overlap with those developed in this previous work.

Unlike principal stratum effects, the conditional separable effects rely only on assumptions that can be falsified in future real-world experiments.  Among these is a simple isolation condition which ensures that (i) the conditional separable effects quantify meaningful direct effects of the treatment on the outcome in a particular subset of the population and (ii) the individuals comprising this subset can be identified. 

Regardless of whether the isolation condition holds, we argue that critical thinking about this condition is essential when the outcome is undefined or not of substantive interest conditional on a post-treatment event status.  In particular, we argue that, when it fails, it is not clear that a meaningful notion of this conditional effect exists. Finally, we show that, given isolation conditions, a conditional separable effect equals a principal stratum effect under an additional monotonicity condition. 




The manuscript is organized as follows. In Section \ref{sec: contribution} we clarify the novel contribution of the manuscript. In Section \ref{sec: observed data structure} we describe the observed data structure. In Section \ref{sec: principal stratum}, we review the interpretation and identification of conventional principal stratum effects. In Section \ref{sec: def conditional separable effects}, we define the conditional separable effects and discuss their interpretation under the isolation condition. In Section \ref{sec: modif treat}, we define a modified treatment assumption \cite{stensrud2019generalized}, allowing the conditional separable effects to explain mechanism of the original treatment under study. In Section \ref{sec: Ay partial meaningful eff} we discuss the need for the isolation condition and modified treatment assumption in order to define a meaningful effect of the original treatment on the outcome in this setting and give an example. In Section \ref{sec: identifiability conditions}, we define conditions that are sufficient to identify the conditional separable effects with the observed data, provide an identification formula, and give new identification results for classical principal stratum estimands in the presence of time-varying common causes of the outcome of interest and the conditioning event. In Section \ref{sec: estimation}, we describe different estimators of the conditional separable effect, including a doubly robust estimator based on the nonparametric influence function. In Section \ref{sec: data example}, we apply our results to understand the effect of different chemotherapies on quality of life in patients with prostate cancer. In Section \ref{sec: discussion}, we end with a discussion.

\section{Contribution of this work}
\label{sec: contribution}
We have previously considered \textit{marginal} separable effects in competing events settings \cite{stensrud2019separable, stensrud2019generalized}. Here, we introduce new causal estimands that are defined \textit{conditional} on a post-treatment event, and that generalize beyond time-to-event outcomes. The conditions that we define for identification of the conditional separable effects lead to distinct identifying functionals from those for the marginal separable effects.  Further, we show that these identifying functionals identify principal stratum effects under the additional assumption of monotonicity, extending the sparse existing literature on principal stratum effects in the presence of time-varying covariates \cite{tchetgen2014identification}. Given the distinct identifying functionals, the estimators we develop here, which include a doubly robust estimator, do not overlap with those developed in the work by Stensrud et al \cite{stensrud2019separable, stensrud2019generalized}.  

\section{Observed Data}
\label{sec: observed data structure}
Consider a randomized experiment with $n$ i.i.d.\ individuals who are assigned a binary treatment $A\in \{0,1\}$ at baseline. Let $k\in \{0,\dots,K+1\}$ index equally spaced discrete time intervals, and $Y\equiv Y_{K+1}$ is the outcome of interest measured in $K+1$. Let $D_{k+1}$ be an indicator of post-treatment event status by $k+1$, such that $Y$ is only defined, or is only of substantial interest, when $D_{k+1}=0$, $k\in \{0,\dots,K\}$. For example, in our running example on cancer treatment and quality of life, $D_{k+1}$ is an indicator of death by $k+1$.  In our other example on vaccination and viral load, $D_{k+1}$ is an indicator of not becoming infected by $k+1$.


Let $L_{0}$ denote a vector of pre-randomization (baseline) covariates, and $L_{k}$ a vector of post-randomization (time-varying) covariates measured at $k$. We adopt the temporal
convention $(L_{0},A,\dots,D_{k},L_{k},\dots,D_{K+1},Y)$, and this assumption does not restrict the distribution of observed data when we let the interval length become infintesimally small.  We use overbars (e.g. $\overline{D}_{k+1}$) to denote the history and underbars (e.g. $\underline{D}_{k+1}$) to denote the future of a random variable relative to $k+1$, respectively. We assume that no subject is lost to follow-up throughout the main text but extend all results to settings with loss to follow-up (censoring) in Appendix \ref{app sec: proof of id}.

\section{Challenges in interpretation and identification of principal stratum effects}
\label{sec: principal stratum}

For any individual in the study population, let $Y^{a}$ and $D^{a}_{k+1}$, $k\in \{0,\dots,K\}$, denote the counterfactual outcome of interest and the post-treatment event indicator by $k+1$, respectively, had she been assigned to $A=a$. The following additive \textsl{principal stratum effect}
\begin{align}
& \mathbb{E} ( Y^{a=1} - Y^{a=0} \mid 
D^{a=1}_{K+1}=D^{a=0}_{K+1}=0 )
\label{eq: SACE}
\end{align}
is the effect of the treatment $A$ on $Y$ in the subset of individuals who would never experience the post-treatment event rendering $Y$ undefined/not of interest under any level of treatment \cite{frangakis2002principal,robins1986new}.   
Because the conditioning set in \eqref{eq: SACE} is defined by outcomes under different counterfactual treatments, it is impossible to observe the individuals in this subset of the population \cite{robins1986new}, which clearly limits its practical relevance \cite{robins2007principal, joffe2011principal, dawid2012imagine}. Further, we are not guaranteed that this unknown subset of the population exists and, if it does exist, it may constitute a highly unusual subset of the original study population. 

Returning to our cancer treatment example, \eqref{eq: SACE} is the effect of cancer treatment on quality of life at end of follow-up among those who would survive throughout the study regardless of what cancer treatment they received.  This may be a highly unusual subgroup, particularly if $a=0$ refers to no treatment.  Similarly, in our vaccine example, \eqref{eq: SACE} is the effect of receiving the vaccine on viral load at end of follow-up among those who would become infected regardless of whether they received the vaccine.  For an effective vaccine, this subgroup may constitute a small and unusual segment of the original population (which, again, cannot be observed).

In addition to these interpretational challenges, strong assumptions are required for identification of principal stratum effects like \eqref{eq: SACE} \cite{ding2017principal,robins1986new,robins2007principal,hayden2005estimator, tchetgen2014identification}, even in idealized settings with a randomly assigned point treatment, no loss to follow-up and no common causes of $Y$ and $D_1$ as represented in the causal directed acyclic graph (DAG) \cite{pearl} in Figure \ref{fig: decomposition 1}a. Throughout, we will use causal DAGs to represent underlying assumptions on how random variables in a particular study are generated (including counterfactual random variables below). Specifically our causal DAGs represent an underlying Finest Fully Randomized Causally Interpreted Structural Tree Graph (FFRCISTG) model (as fine as the data) \cite{robins1986new,richardson2013single},  which is a counterfactual causal model that predates and makes fewer assumptions than the perhaps more familiar non-parametric structural equation model with independent errors (NPSEM-IE) \cite{robins2010alternative, pearl, richardson2013single}. The absence of an arrow on a causal DAG will be used to represent the assumption that an individual-level causal effect is absent for every subject.  

A causal DAG must minimally represent all common causes of any variable represented on the DAG.  Therefore, Figure \ref{fig: decomposition 1}a represents a generally restrictive assumption on the study data generating process because it depicts no common causes (measured or unmeasured) of $D_1$ and $Y$, which cannot be guaranteed even in a perfectly executed trial.  Only a handful of authors have considered identification of principal stratum effects when common causes of the conditioning event and the outcome of interest exist and may be affected by treatment \cite{tchetgen2014identification}; this more realistic data generating assumption is represented in Figure \ref{fig: isolation conditions}a.   Further, previously posed identification strategies for \eqref{eq: SACE} have relied on \textsl{unfalsifiable} assumptions; that is, assumptions that can never be challenged in any plausible experiment. 

In the sections that follow, we will introduce new counterfactual estimands that overcome these limitations of principal stratum effects.  These estimands can in principle be identified by design in a future real-world experiment.  Further, under a set of additional assumptions, that are falsifiable in such an experiment, these new estimands can be identified in a current experiment or observational study and capture a meaningful notion of treatment mechanism in an identified subset of the original study population. These assumptions are compatible with the existence of common causes (possibly affected by treatment) of the conditioning event and the outcome of interest. Finally, we show that under one additional assumption, these estimands happen to equal a principal stratum effect such as \eqref{eq: SACE}.



\section{The Conditional Separable Effects}
\label{sec: def conditional separable effects}

Following Robins et al \cite{robins2010alternative,robins2020interventionist} in a mediation context and Stensrud et al \cite{stensrud2019separable, stensrud2019generalized} in a competing event context, suppose a four-arm trial could be plausibly conducted such that, in place of assignment   to one of the two values of $A$ as in Section \ref{sec: observed data structure}, individuals are jointly assigned values of two new treatments $A_Y\in \{0,1\}$ and $A_D\in \{0,1\}$.   Let $Y^{a_Y,a_D}$ and $D_{k+1}^{a_Y,a_D}$ denote the counterfactual outcome of interest and the post-treatment event indicator by $k+1$, respectively, had an individual been assigned $A_Y=a_Y$ and $A_D=a_D$ for $a_Y$ and $a_D$ possible realizations of $A_Y$ and $A_D$, respectively. We denote this four-arm trial by $G$. Consider the following condition relative to a causal DAG representing the assumed data generating mechanism under $G$,
\begin{align}
    & \text{there are no causal paths from } A_Y(G) \text{ to } D_{k+1}(G), \quad k \in \{0,\dots,K\}, \label{def: Ay partial iso} 
\end{align}
where any node $V(G)$ represented on the DAG denotes a random variable $V$ under the treatment assignment in $G$. Like the causal DAG representing the data generating mechanism for the current study of Section \ref{sec: observed data structure} (e.g.\ Figure \ref{fig: decomposition 1}a), the causal DAG representing the mechanism for the future trial $G$ (e.g.\ Figure \ref{fig: decomposition 1}b) relies on subject matter expertise/assumptions relative to the modified treatments $A_Y$ and $A_D$.  

Condition \eqref{def: Ay partial iso} relates to the condition of $A_Y$ partial isolation for time-to-event outcomes given in Stensrud et al \cite{stensrud2019generalized}, and therefore we will use the same terminology here.\footnote{In the mediation context, Robins and Richardson \cite[Section 6.1]{robins2010alternative} provided (using different nomenclature) examples of $A_Y$ partial isolation (their Figure 6a) and the related conditions $A_D$ partial isolation (their Fig 6.b) and full isolation (their Figure 4) (see also Stensrud \cite{stensrud2019generalized} and Section \ref{sec: full iso section} in this article).  Robins and Richardson showed identification of the marginal distribution of $Y^{a_D,a_Y}$, for $a_D \ne a_Y$,under the assumptions that $A$ was randomly assigned and \eqref{assumption: Determinsm} held. However they showed the identifying formulas depended on whether the true causal DAG satisfied $A_Y$ versus $A_D$ partial isolation. In contrast to the current paper Robins and Richardson \cite[Section 6.1]{robins2010alternative} did not consider interpretation and identification of conditional effects.} However, the version of $A_Y$ partial isolation we consider here (i.e.\ condition  \eqref{def: Ay partial iso}) is defined relative to outcomes that may not be time-to-events. $A_Y$ partial isolation is illustrated in the causal DAGs of Figure \ref{fig: decomposition 1}b and Figures \ref{fig: isolation conditions}b-c (with the index $G$ suppressed in the latter and in subsequently presented causal DAGs under $G$ to avoid clutter). 
By contrast, $A_Y$ partial isolation fails in Figure \ref{fig: isolation conditions}d due to the path $ A_Y(G) \rightarrow Z_1(G) \rightarrow D_2(G)$.  In a causal DAG \cite{pearl, robins2010alternative}, $A_Y$ partial isolation \eqref{def: Ay partial iso} ensures that $D_{k+1}^{a_Y,a_D} = D_{k+1}^{a_D}, \forall a_Y, a_D \in \{0,1\}$ and $\forall k \in \{0,\dots, K\} $ where $D_{k+1}^{a_D}$ denotes the counterfactual value of the post-treatment event indicator by $k+1$, had an individual been assigned $A_D=a_D$ and the natural value of $A_Y$ \cite{richardson2013single}. In other words, $A_Y$ partial isolation ensures that the treatment $A_Y$ does not directly or indirectly affect the post-treatment event by any $k+1$.  Importantly, $A_Y$ partial isolation is a falsifiable assumption.  For example, we can observe in the future 4-arm trial $G$ whether $$E\left[D_{k+1}(G) \mid A_Y(G)=1, A_D(G)=a_D\right]= E\left[D_{k+1}(G) \mid A_Y(G)=0, A_D(G)=a_D\right],k=0,\dots,K, $$ and $a_D=\{0,1\}$, with failure of this equality falsifying $A_Y$ partial isolation. 

The relation between $A_Y$ partial isolation in a causal DAG representing an underlying FFRCISTG model and the equality $D_{k+1}^{a_Y,a_D} = D_{k+1}^{a_D}$ can be more explicitly seen by minimal counterfactual labelling in a Single World Intervention Graph (SWIG), which explicitly depicts counterfactual variables \cite{richardson2013single}.  For example, Figure \ref{fig: swig minimal labelling}a depicts a SWIG that is a transformation of the causal DAG in Figure \ref{fig: isolation conditions}c, consistent with $A_Y$ partial isolation, under an intervention that sets $A_Y$ to $a_Y$ and $A_D$ to $a_D$.  Minimal labelling allows removal of the $a_Y$ superscript on the counterfactual values under intervention of both $Z_1$ and $D_2$.  By contrast, minimal labelling does not allow removal of this $a_Y$ superscript in Figure \ref{fig: swig minimal labelling}b which is a corresponding transformation of the causal DAG in \ref{fig: isolation conditions}d consistent with failure of $A_Y$ partial isolation.

Under $A_Y$ partial isolation, the following counterfactual contrast 
\begin{align}
&\mathbb{E} ( Y^{a_Y=1,a_D}  - Y^{a_Y=0,a_D} \mid
D^{a_D}_{K+1}=0 )
\label{eq: conditional sep eff 1}
\end{align}
is the average causal effect of the treatment $A_Y$ on $Y$ when all individuals are assigned $A_D=a_D$, in the subset of individuals who do not experience the post-treatment event under $A_D=a_D$, regardless of the value of $A_Y$ they are assigned.  Thus, under $A_Y$ partial isolation this subset of individuals can be directly observed in a study that assigns $A_Y$ and $A_D$ as simply the subset of individuals who do not experience the post-treatment event among all those receiving $A_D=a_D$.  This is in contrast to the unobservable subset of individuals who define the principal stratum effect \eqref{eq: SACE}.  We refer to \eqref{eq: conditional sep eff 1} as the \textsl{conditional separable effect} evaluated at $a_D \in \{0,1\}$. We take the assumption of $A_Y$ partial isolation as given throughout the remainder of this manuscript, unless otherwise stated. In Section \ref{sec: Ay partial meaningful eff} we will argue that this assumption is required to meaningfully define an effect of $A$ on $Y$ conditional on $\overline{D}_{K+1}=0$.  

In Section \ref{sec: prostate revisited}, we discuss two future treatments $A_Y$ and $A_D$ consistent with the assumption of $A_Y$ partial isolation (\ref{def: Ay partial iso}) in the cancer and quality of life example.  Provided that these treatments are defined such that they can plausibly be developed and assigned in a future trial $G$, the conditional separable effect evaluated at $A_D=a_D$ is identified by design in that trial and trivially estimated by the mean difference in outcomes in the arm assigned $A_Y=1$ and $A_D=a_D$ versus $A_Y=0$ and $A_D=a_D$, among all those in these two arms not experiencing the post-treatment event.

\section{The Modified Treatment Assumption} 
\label{sec: modif treat}
We now give conditions under which the conditional separable effects provide an explanation of the mechanism by which the original treatment $A$ affects $Y$. Consider two studies: the current study where $A$ is randomly assigned and a future study where $A_Y$ and $A_D$ are jointly assigned. Following Stensrud et al \cite[Appendix A]{stensrud2019generalized}, for two variables $M_Y$ and $M_D$, suppose that the following conditions hold in these two studies:
\begin{align}
& \text{All effects of } A, A_Y \text{ and } A_D \text{ on }  Y \text{ and } D_k, k \in \{0,\dots, K\}, \text{ are intersected }  \nonumber \\
& \text{by } M_Y \text{ and } M_D, \text{ respectively, and} \nonumber \\
& \quad M^{a_Y=a,a_D}_Y = M^{a}_Y \quad \text{ for } a_D \in \{0,1\}, \nonumber \\ 
& \quad M^{a_Y,a_D=a}_D = M^{a}_D \quad  \text{ for } a_Y \in \{0,1\}. \label{ass: modif treat}
\end{align}
We refer to \eqref{ass: modif treat} as the \textit{modified treatment assumption}. It follows from \eqref{ass: modif treat} that jointly assigning $A_Y$ and $A_D$ to the same value $a$ leads to exactly the same values of $Y$ and $D_{k+1}, k \in \{0,\dots, K\}$ as assigning $A$ to $a$. Robins and Richardson \cite{robins2010alternative} introduced a decomposition assumption that is covered by our modified treatment assumption \eqref{ass: modif treat}: let the treatments $A_Y$ and $A_D$ constitute a decomposition of $A$ such that $A$ exerts all its effects on $Y$ and $\overline{D}_{K+1}$ through $A_Y$ and $A_D$, and the following determinism holds in the current study,
\begin{align}
    A\equiv A_D\equiv A_Y.
    \label{assumption: Determinsm}
\end{align} 
Their decomposition assumption is a special case of \eqref{ass: modif treat} by defining $A_Y \equiv M_Y$ and  $A_D \equiv M_D$. We can consider an example where \eqref{ass: modif treat} holds but not their decomposition assumption. Suppose that an old chemotherapeutic treatment ($A=1$) has an unpleasant side-effect that reduces quality of life compared to no treatment ($A=0$): it causes nausea by binding to certain (neurokinin 1) receptors in the brain. However, a modified version of this treatment ($A_Y=0,A_D=1$) can be created, where the chemical structure of the old treatment is slightly changed such that it no longer binds to the receptors in the brain ($M^{a_Y=0,a_D=1}_Y = M^{a=0}_Y$), but still exerts the cytotoxic effects on cancer cells and thus reduces mortality ($M^{a_Y=0,a_D=1}_D = M^{a=1}_D$). Suppose that the new drug satisfies \eqref{ass: modif treat}. Yet, it does not necessarily satisfy the treatment decomposition assumption \eqref{assumption: Determinsm} because it is not a (physical) decomposition of the old treatment. 

Under \eqref{ass: modif treat}, the conditional separable effect evaluated at $A_D=a_D$ \eqref{eq: conditional sep eff 1} is defined in the subset of the population with $D_{k+1}^{a_Y,a_D}=D_{k+1}^{a=a_D}=0$, which is simply identified in the current trial of Section \ref{sec: observed data structure} by those with $D_{K+1}=0$ and treatment assignment $A=a_D, a_D \in \{0,1\}$. 

Like $A_Y$ partial isolation, assumption \eqref{ass: modif treat} for a choice of $A_Y$ and $A_D$ must be justified by subject matter knowledge, and can, in principle, be falsified in a plausible six-arm randomized experiment (denoted $G'$) in which individuals are randomly assigned to $A$ (without assignment to $A_Y$ or $A_D$) or joint assignment to $A_Y$ and $A_D$ (without assignment to $A$). We can for example observe in this six-arm trial whether $E(V \mid A_Y=a, A_D=a)= E(V \mid A=a), \text{ for } a =\{0,1\} \text{ and } V \in \{Y,\overline{D}_{K+1}\}$, with failure of this equality falsifying \eqref{ass: modif treat}.

Causal graphs can display a necessary condition for the modified treatment assumption \eqref{ass: modif treat}; that is, whether $A$, $A_Y$ and $A_D$  exert all their effects on $Y$ and $D_{k+1}, k \in \{0,\dots, K\}$ through $M_Y$ and $M_D$. This is illustrated in Figure \ref{fig: decomposition 1}d, which describes a six-arm trial $G'$ that expands the 2 arm-trial in Figure \ref{fig: decomposition 1}a. Furthermore, the graph in Figure \ref{fig: decomposition 1}b can be interpreted as a transformation of the graph in Figure \ref{fig: decomposition 1}d that removes the node $A$, representing the data generating mechanism under $G'$ had we removed the two arms assigning $A$ (i.e.\ a four-arm trial $G$). It is not necessary to include $M_Y$ and $M_D$ in this reduced graph, because they are not common causes of any variable.

If we impose the treatment decomposition assumption of Robins and Richardson \cite{robins2010alternative}, we can represent the causal structure in an \textsl{extended causal DAG} \cite{robins2010alternative}; that is, a transformation of the original causal DAG representing $A$ and the components $A_Y$ and $A_D$, where the mechanisms by which these components individually operate on outcomes are encoded \cite{robins2010alternative,stensrud2019generalized}.  For example, Figure \ref{fig: decomposition 1}c is an extension of Figure \ref{fig: decomposition 1}a representing  \eqref{assumption: Determinsm}, with bold arrows representing deterministic relations. Under the decomposition assumption, Figure \ref{fig: decomposition 1}b can be interpreted as a $G$-transformation of the extended DAG in Figure \ref{fig: decomposition 1}c, where (i) the node $A$ (and any of its causes) are removed and (ii) all nodes are indexed by $G$, with $G$ again indexing the 4-arm trial discussed above.

\section{$A_Y$ partial isolation and meaningful effects of $A$ on $Y$}
\label{sec: Ay partial meaningful eff}
By explicitly considering modified treatments such that $A_Y$ partial isolation \eqref{def: Ay partial iso} and the modified treatment assumption holds, the investigator is forced to articulate what she means by an "effect of $A$ on $Y$ not through $\overline{D}_{K+1}$" under assumptions that are falsifiable in a future experiment. This thought process requires the investigator to be explicit about her notion of a causal mechanism, and allows the consideration of a well-defined causal effect: in particular, under these assumptions, the conditional separable effects quantify mechanisms by which $A$ affects $Y$ that can be entirely \emph{separated} from mechanisms by which $ A$ affects $D_{k+1}, k \in \{0,\dots, K\}$. We discuss this further in Section \ref{sec: full iso section}.  
 


If the investigator is unable to express a convincing story about modified treatments satisfying \eqref{def: Ay partial iso} and the modified treatment assumption, then the relevance of an effect of $A$ on $Y$ outside of its effect on the post-treatment event $D_{k+1}$ is ambiguous: the investigator has failed to give a plausible scientific argument as to how effects of $A$ on $Y$ can be disentangled from effects of $A$ on $D_{k+1}$. Yet, even if the investigator cannot define plausible modified treatments satisfying \eqref{def: Ay partial iso} and the modified treatment assumption at this moment in time, these assumptions may be justified in the future: modified treatments satisfying these assumptions might be revealed when more subject-matter knowledge becomes available. However, until a plausible story can be articulated such that \eqref{def: Ay partial iso} and the modified treatment assumption are satisfied, the practical relevance of considering any effect of $A$ on $Y$ outside of its effect on $D_{k+1}$ -- including the conventional principal stratum effect -- is unclear.  In this case, the investigators must accept that they do not understand how $A$ exerts such effects and have no way to assess whether such effects are operating in the data without reliance on assumptions that are impossible to ever challenge in real-life experiments.  In turn, without any ideas about such modified treatments, we cannot imagine interventions to avoid or leverage effects of $A$ on $Y$ outside of its effect on $D_{k+1}$.


\subsection{Example: Cancer Treatment and Quality of Life}
\label{sec: prostate revisited}
Returning to our cancer treatment and quality of life example, suppose that Figure \ref{fig: isolation conditions}a represents data generating assumptions on a trial that assigns treatment at baseline ($A=1$ is new chemotherapy, $A=0$ is standard chemotherapy) with $Y$ a quality of life measure at end of follow-up and $D_{k+1}$ an indicator of death by time $k+1$.  Suppose that there exist two modified treatments $A_Y$ and $A_D$ satisfying the following assumptions, which are also illustrated in Figure \ref{fig: isolation conditions}c: $A_D$ exerts effects on mortality $D_{k+1}, k = 0, \dots, K$, e.g.\ by destroying or reducing the growth of cancer cells and thereby preventing cancer progression ($Z_j,  j = 0, \dots, k+1$). By preventing cancer progression, the $A_D$ component may also reduce other health problems, e.g.\ due to metastases, which affect quality of life $Y$ (the path $A_D \rightarrow Z_1 \rightarrow Y$ in Figure  \ref{fig: isolation conditions}c). The other component $A_Y$ does not exert effects on mortality, because it has little to no activity against the cancer but may have side effects that adversely affect quality of life; for example it may interfere with the replication of epithelial mucosal cells, resulting in diarrhea and oral ulcers. Alternatively, this component may possibly have beneficial effects (say due to a decrease in diarrhea and oral ulcers). 

In this setting, the conditional separable effect evaluated at $a_D=1$ quantifies the treatment effect on quality of life outside of its effect on disease progression.  Specifically, this quantifies the effect of assignment to a current chemotherapy (e.g.\ $a =a_Y=a_D= 1$) versus a modified (hypothetical) therapy that contains the component of the current therapy that reduces mortality and disease progression ($a_D=1$), but does not contain the component that exerts effects on quality of life outside of mortality ($a_Y=0$). An improvement of quality of life under the modified therapy suggests that the new chemotherapy ($a=1$) contains a component that would be desirable to eliminate. 

$A_Y$ partial isolation would fail to hold in our example if $A_Y$ exerts effects on a common cause of $Y$ and $D_{k+1}, k = 0, \dots, K+1$, as illustrated in Figure \ref{fig: isolation conditions}d by the path $A_Y \rightarrow Z_1 \rightarrow D_2$.  One possible common cause could be quality of life at earlier times $k < K$. If the $A_Y$ component exerts effects on quality of life (which we now denote $Z_k$ for $k < K$) only after a minimal latent period that extends beyond the study period \cite{robins2008causal}, then $A_Y$ partial isolation may still be justified (e.g. the arrow from $A_Y$ into $Z_1$ in Figure \ref{fig: isolation conditions}d may be removed). Alternatively, if quality of life only exerts effects on mortality after a minimal latent period that extends beyond the study period then $A_Y$ partial isolation may be justified (e.g. the arrow from $Z_1$ into $D_2$ can be removed). 


\subsection{When the conditional separable effect is the direct effect of $A$}
\label{sec: full iso section}
While the conditional separable effects are well-defined under $A_Y$ partial isolation, this condition allows additional causal paths from $A$ to $Y$ that are not intersected by $D_{k+1}, k = 0, \dots K$, as illustrated in Figure \ref{fig: isolation conditions}c by the path $ A_D \rightarrow Z_1 \rightarrow Y$. However, suppose that, in a causal DAG representing the assumed data generating mechanism in the four-arm trial $G$,
\begin{align}
    & \text{the only causal paths from } A_D(G) \text{ to } Y(G),  \text{ are directed paths} \label{def: Ad partial iso}  \\ 
    &\text{intersected by } D_{k}(G), k \in \{1,\dots,K\}, \nonumber 
\end{align}
which we refer to as $A_D$ partial isolation \cite{stensrud2019generalized}. When both $A_D$ partial isolation \eqref{def: Ad partial iso} and $A_Y$ partial isolation \eqref{def: Ay partial iso} simultaneously hold, we say there is \textsl{full isolation} \cite{stensrud2019generalized}.  Under full isolation, the conditional separable effects capture \emph{all} causal paths from $A$ to $Y$ not intersected by $D_{k+1}, k = 0, \dots K$. Full isolation is represented in Figures \ref{fig: decomposition 1}b and \ref{fig: isolation conditions}b.

Returning to our running cancer treatment and quality of life example represented by Figure \ref{fig: isolation conditions}c, full isolation would hold under the stronger assumption that cancer progression $Z_k, k = 0, \dots, K+1$ does not affect quality of life $Y$, allowing removal of the arrow from $Z_1$ to $Y$. This assumption seems to be implausible for many cancer treatments: by preventing cancer progression, the treatment will not only reduce mortality, but also reduce other effects of progression, such as pain related to tumor growth. 


\section{Identifiability conditions}
\label{sec: identifiability conditions}
If we had data from a four-arm trial in which $A_Y$ and $A_D$ were randomly assigned and censoring is absent, conditions required to identify $\mathbb{E}( Y^{a_Y,a_D} \mid 
D^{a_Y,a_D}_{K+1}=0)$ for $a_Y,a_D \in \{0,1\}$ hold by design.  Consequently, these conditions also identify the conditional separable effects under the assumption of $A_Y$ partial isolation by
\begin{equation}
\mathbb{E} ( Y^{a_Y,a_D}\mid
D^{a_D}_{K+1}=0 )=  \mathbb{E} ( Y^{a_Y,a_D}\mid
D^{a_Y,a_D}_{K+1}=0 ). \label{ass: parameter equality}
\end{equation} 
However, in the two-arm trial in which only the original treatment $A$ is randomly assigned, we are not guaranteed identification of $\mathbb{E}( Y^{a_Y,a_D} \mid D^{a_Y,a_D}_{K+1}=0)$ when $a_Y \neq a_D$ in this trial even when censoring is absent.

We now consider a set of conditions, beyond the conventional exchangeability, consistency and positivity conditions that hold by design in the two-arm trial of Section \ref{sec: observed data structure} (reviewed in Appendix \ref{app sec: proof of id}, Theorem \ref{theorem g formula}). Under $A_Y$ partial isolation and the modified treatment assumption, these conditions are sufficient to identify $\mathbb{E} ( Y^{a_Y,a_D}\mid
D^{a_Y,a_D}_{K+1}=0 )$ when $a_Y \neq a_D$ using only data from this existing trial.  By \eqref{ass: parameter equality}, this allows identification of the conditional separable effects.
\begin{enumerate}
\item[1.] Positivity: 
\begin{align}
& f_{\overline{L}_k,D_{k+1}}(\overline{l}_k,0) > 0  \implies \nonumber\\ 
& \quad  \Pr(A=a|D_{k+1}=0,\overline{L}_k=\overline{l}_k)>0, k \in \{0,\ldots,K\}, a\in\{0,1\}. \label{eq: positivity of A part 2}
\end{align}%
Assumption \eqref{eq: positivity of A part 2} states that for any possibly observed level of the time-varying covariate history among those surviving through each follow-up time, there exist individuals with $A=1$ and individuals with $A=0$. Assumption \eqref{eq: positivity of A part 2} does not hold by design in a randomized experiment, but it can be assessed in the observed data. 

\item[2.] Dismissible component conditions:
\begin{align}
& Y(G) \independent A_D(G) \mid A_Y(G), D_{K+1}(G)=0, \overline{L}_K(G), \label{ass: delta 1}\\
 & D_{k+1}(G) \independent A_Y(G) \mid A_D(G), D_{k}(G)=0, \overline{L}_k(G), \label{ass: delta 2} \\
& L_{k+1}(G) \independent A_Y(G) \mid A_D(G), D_{k+1}(G)=0, \overline{L}_{k}(G),  \label{ass: delta 3}
\end{align} 
for all $k \in \{0,\dots, K\}$ where $Y(G)$, $ D_{k+1}(G)$ and $L_{k+1}(G)$ are values of the outcome of interest, the conditioning event and the measured covariates at $k+1$ had we implemented the four-arm trial $G$ that jointly assigns combinations of $A_Y$ and $A_D$. If the dismissible component conditions \eqref{ass: delta 1}-\eqref{ass: delta 3}, then $A_Y$ partial isolation holds (See Lemma \ref{lemma: diss and AY partial} in Appendix \ref{sec: dismissible comp imply Ay partial}). The dismissible component conditions \eqref{ass: delta 1}-\eqref{ass: delta 3} can be assessed in $G$-transformation graphs as discussed in Section \ref{sec: modif treat}; i.e.\ causal DAGs that represent the assumed data generating mechanism in the future four-arm trial $G$. For example Figure \ref{fig: dismissible component conditions} depicts the assumption that there exist measured (e.g.\ $L_1$) and unmeasured (e.g.\ $U_{L,Y}$ or $U_{L,D}$) common causes of $Y$ and $\overline{D}_2$ in $G$. The dismissible component conditions hold in Figures \ref{fig: dismissible component conditions}a-c, but \eqref{ass: delta 1} is violated in Figures \ref{fig: dismissible component conditions}d-e and \eqref{ass: delta 3} is violated in Figure \ref{fig: dismissible component conditions}f. Also, $A_Y$ partial isolation is violated in Figures \ref{fig: dismissible component conditions}c,d,f.   
\end{enumerate}

Under these assumptions, we can identify the conditional counterfactual mean $\mathbb{E}  ( Y^{a_Y,a_D} \mid 
D^{a_Y,a_D}_ {K+1}=0)$ by the following function of the observed data,
\begin{align}
& \frac{\sum_{\overline{l}_K}  \mathbb{E}  ( Y \mid D_{K+1}=0,  \overline{L}_{K}=\overline{l}_{K}, A=a_Y)  f_{\underline{L}_1,\overline{D}_{K+1} \mid L_0, A}(\underline{l}_{1},0 \mid l_0,  a_D)f_{L_0}(l_0)}{P(D_{K+1}=0 \mid A=a_D)} . \nonumber \\ 
\label{eq: id formula no cens}
\end{align}
See Appendix \ref{app sec: proof of id} (Theorem \ref{theorem g formula}) for a proof that also covers settings where individuals can be lost to follow-up\footnote{In Appendix \ref{app sec: proof of id}, the proof for the identification formula of  $\mathbb{E}  ( Y^{a_Y,a_D} \mid 
D^{a_Y,a_D}_ {K+1}=0)$ is  given in a more general setting where $A_Y$ partial isolation is not required to hold. However, $A_Y$ partial isolation is a necessary condition for the conditional separable effects to be well-defined.}.  We say that \eqref{eq: id formula no cens} is the g-formula for $\mathbb{E}  ( Y^{a_Y,a_D} \mid 
D^{a_Y,a_D}_{K+1}=0)$ \cite{robins1986new}. 

Importantly,  $A_Y$ partial isolation clarifies when a contrast of conditional counterfactual outcome means, 
\begin{align}
    & \mathbb{E}  ( Y^{a_Y=1,a_D} \mid 
D^{a_Y=1,a_D}_{K+1}=0) \text{ vs. } \mathbb{E}  ( Y^{a_Y=0,a_D} \mid 
D^{a_Y=0,a_D}_{K+1}=0), \label{eq: conditional means agnostic}
\end{align}
can be interpreted as a conditional causal effect: even if treatment $A_Y$ is temporally ordered before $D_{K+1}$, $A_Y$ partial isolation allows us to topologically order the component $A_Y$ \textit{after} $D_{K+1}$. That is, our estimands are isomorphic to estimands defined by interventions where we first assign a treatment $A_D$ and then, after $D_{K+1}$ occurs, we subsequently assign $A_Y$. This can be seen on a SWIG with minimal labelling \cite{richardson2013single} consistent with $A_Y$ partial isolation, as in Figure \ref{fig: swig minimal labelling}a. However, in Appendix \ref{app sec: proof of id} we show that $A_Y$ partial isolation is not in general needed to identify $\mathbb{E}( Y^{a_Y,a_D} \mid D^{a_Y,a_D}_{K+1}=0)$; $A_Y$ partial isolation is only needed to ensure that \eqref{eq: conditional means agnostic} is an average over individual level causal effects.

\subsection{Related works on identification of path specific effects}
\label{sec: related works}
There is an intimate link between identification results for our estimands and path-specific effects, as recently discussed in Robins et al \cite{robins2020interventionist}: when there is no so-called \textit{recanting witness} \cite{avin2005identifiability, shpitser2013counterfactual} (see also Stensrud et al \cite[Section 6.5]{stensrud2019generalized}), the (cross-world) counterfactuals defining a path specific effect are equal to (single-world) counterfactuals defined by interventions on nodes in an extended causal graph, like Figure \ref{fig: decomposition 1}c. Thus, the identification formulas for effects defined by interventions in an extended causal graph are equal to identification formulas derived for certain path-specific effects \cite{robins2020interventionist}. Malinsky et al \cite{malinsky2019potential} recently derived a complete algorithm for identification of conditional path specific distributions, which is inspired by the general identification theory of Shpitser \cite{shpitser2013counterfactual}. However, while the identification formulas for these estimands can be equal, the effects being identified are different: they refer to interventions on different variables (see Robins et al \cite{robins2020interventionist} for a detailed discussion). Unlike our current results, the previous works did not consider conditions under which a contrast of conditional counterfactual outcome parameters can be interpreted as a conditional causal effect (i.e.\ a contrast of counterfactual outcome means in the same set of individuals).

\subsection{New identification results for principal stratum effects}
\label{sec: sep eff vs ps}

For the principal stratum effect \eqref{eq: SACE} to target the same subpopulation as the conditional separable effects \eqref{eq: conditional sep eff 1} under both $a_D=0$ and $a_D=1$, we must make the additional strong assumption that $D^{a=1}_{k+1} = D^{a=0}_{k+1}$, for $k \in \{0,\dots, K\}$. For example, under the assumption that lack of arrows in a causal graph means no individual level causal effect, this requires that the arrow from $A$ into $D_1$ in Figure \ref{fig: decomposition 1}a or the arrows from $A$ into $D_1$ and $D_2$ in Figure \ref{fig: isolation conditions}a should be removed. In our cancer and quality of life example, this assumption requires survival to be identical under the new ($a=1$) and standard ($a=0$) chemotherapy.  This underscores that the conditional separable effects and the principal stratum effect \eqref{eq: SACE} -- specifically the \textit{survivor average causal effect} (SACE) in this example -- are substantially different estimands; these estimands will only be equivalent in the restrictive setting where $A$ exerts no effect on mortality.  

Suppose instead we make the weaker assumption that the effect of $A$ on $D_{k+1}$ is \emph{monotone}: without loss of generality, this is the assumption that $D^{a=1}_{k+1} \leq D^{a=0}_{k+1}$ in all individuals for all $k$.  Under this assumption and full isolation, the principal stratum effect \eqref{eq: SACE} is equivalent to the conditional separable effect under $a_D = 0$,
\begin{align}
& \mathbb{E} ( Y^{a=1} - Y^{a=0} \mid 
D^{a=1}_{K+1}=D^{a=0}_{K+1}=0 ) \nonumber \\
= & \mathbb{E} ( Y^{a=1} - Y^{a=0} \mid 
D^{a=0}_{K+1}=0 ) \quad \text{ by monotonicity} \nonumber \\
= & \mathbb{E} ( Y^{a_Y=1,a_D=1} \mid 
D^{a_Y=0,a_D=0}_{K+1}=0 ) - \mathbb{E} ( Y^{a_Y=0,a_D=0} \mid
D^{a_Y=0,a_D=0}_{K+1}=0 )  \text{ by } \eqref{ass: modif treat}  \nonumber \\
= & \mathbb{E} ( Y^{a_Y=1,a_D=0} \mid 
D^{a_Y=0,a_D=0}_{K+1}=0 ) - \mathbb{E} ( Y^{a_Y=0,a_D=0} \mid
D^{a_Y=0,a_D=0}_{K+1}=0 )  \text{ by } \eqref{def: Ad partial iso} \nonumber \\ 
= &\mathbb{E} ( Y^{a_Y=1,a_D=0}  - Y^{a_Y=0,a_D=0} \mid
D^{a_Y=0,a_D=0}_{K+1}=0 )  \quad   \nonumber  \\
= &\mathbb{E} ( Y^{a_Y=1,a_D=0}  - Y^{a_Y=0,a_D=0} \mid
D^{a_D=0}_{K+1}=0 ) \text{ by }  \eqref{def: Ay partial iso}.  
\label{eq: contrast mono}
\end{align}

It follows directly from \eqref{eq: contrast mono} that full isolation and conditions \eqref{eq: positivity of A part 2}-\eqref{ass: delta 3} are sufficient to identify the principal stratum effect \eqref{eq: SACE} under monotonicity in the two-arm trial. Thus, our identification results supplement the few suggested identifiability assumptions that are sufficient for identification of principal stratum effects, such as the survivor average causal effect, in the presence of measured, possibly time-varying, common causes of the event of interest and the post-treatment conditioning event \cite{ tchetgen2014identification}. Furthermore, except for monotonicity, we only rely on assumptions that can be falsified (rejected) in a future randomized experiment. 


\section{Estimation}
\label{sec: estimation}
Let $\nu_{a_Y,a_D} $ denote the g-formula \eqref{eq: id formula no cens}. Here we consider various estimators for this parameter in the absence of censoring.  Extensions to allow censoring are given in Appendix \ref{sec: appendix censoring}.   

\subsection{Outcome regression estimator} A simple outcome regression estimator $\hat{\nu}_{or,a_Y,a_D}$ of $\nu_{a_Y,a_D} $ is the solution to the estimating equation $\sum_{i=1}^{n}U_{or,i}(\nu_{a_Y,a_D},\hat{\theta})=0$ with respect to $\nu_{a_Y,a_D}$, with
\begin{align*}
U_{or,i}(\nu_{a_Y,a_D},\hat{\theta}) =   I(A_i=a_D) (1-D_{k+1,i})  \left( \mathbb{E}(Y \mid D_{K+1}=0, A=a_Y, \overline{L}_{K} ; \hat{\theta}) - \nu_{a_Y,a_D} \right),
\end{align*}
where $\mathbb{E}(Y \mid D_{K+1}=0, A=a_Y, \overline{L}_{K} ; \theta)$ is a parametric model for $\mathbb{E}  ( Y \mid D_{K+1}=0,  \overline{L}_{K}, A=a_Y)$ indexed by the parameter ${\theta}$ and assume $ \hat{\theta}$ is its MLE.  The estimating equation $U_{or,i}(\nu_{a_Y,a_D},\hat{\theta})$ has mean zero and the estimator $\hat{\nu}_{or,a_Y,a_D}$ is consistent provided that this model is correctly specified. 

\subsection{Weighted estimator}

Alternatively, define the weighted estimator $\hat{\nu}_{ipw,a_Y,a_D} $ of $\nu_{a_Y,a_D} $ as the solution to the estimating equation $\sum_{i=1}^{n}U_{ipw,i}(\nu_{a_Y,a_D},\hat{\alpha})=0$ with respect to $\nu_{a_Y,a_D}$, with
\begin{align*}
& U_{ipw,i}(\nu_{a_Y,a_D},\hat{\alpha}) =   I(A_i=a_Y) (1-D_{k+1,i}) \hat{W}_{i}(a_Y,a_D;\hat{\alpha}) \left(  Y_{i}  - \nu_{a_Y,a_D} \right), \nonumber \\
\end{align*}
such that
\begin{align*}
 \hat{W} (a_Y,a_D;\hat{\alpha}) &= \frac{  f_{\underline{L}_1,\overline{D}_{K+1} \mid L_0, A}(\underline{L}_{1},0 \mid L_0, a_D; \hat{\alpha}) }{  f_{\underline{L}_1,\overline{D}_{K+1} \mid L_0, A}(\underline{L}_{1},0 \mid L_0, a_Y ; \hat{\alpha}) }
\end{align*}
is an estimator of 
\begin{align}
W (a_Y,a_D) = \frac{  f_{\underline{L}_1,\overline{D}_{K+1} \mid L_0, A}(\underline{L}_{1},0 \mid L_0, A = a_D) }{ f_{\underline{L}_1,\overline{D}_{K+1} \mid L_0, A}(\underline{L}_{1},0 \mid L_0,  A = a_Y) }, 
\label{eq: weight w_1}
\end{align}
where $ f_{\underline{L}_1,\overline{D}_{K+1} \mid L_0, A}(\underline{L}_{1},0 \mid L_0, A = a; \alpha)$ is a parametric model for $ f_{\underline{L}_1,\overline{D}_{K+1} \mid L_0, A}(\underline{L}_{1},0 \mid L_0, A = a)$ indexed by the parameter $ \alpha$ and assume $ \hat{\alpha}$ is its MLE. If this model is correctly specified, then $\hat{\nu}_{ipw,a_Y,a_D} $ is a consistent estimator for $\nu_{a_Y,a_D} $, which follows from an alternative representation of the g-formula that is derived in Appendix \ref{app sec: alternative id formula proof} (Theorem \ref{theorem: weighted id formula 1}). 

In practice, parameterizing the terms in $W(a_Y,a_D)$ requires care when $\overline{L}_K$ is high-dimensional. However, we can re-express  $\hat{W} ( \cdot )$ as a product of the terms
\begin{align*}
\hat{W}_{D} (a_Y,a_D;\hat{\alpha}_D) &= \frac{\prod_{j=0}^{K}  \Pr(D_{j+1}=0 \mid   \overline{L}_{j} ,D_{j}=0 ,  A = a_D; \hat{\alpha}_D) }{ \prod_{j=0}^{K} \Pr(D_{j+1}=0 \mid   \overline{L}_{j} ,D_{j}=0 ,  A = a_Y ; \hat{\alpha}_D) }, \\
\hat{W}_{L} (a_Y,a_D;\hat{\alpha}_{L}) &=
\frac{\prod_{j=0}^{K-1} f_{\overline{L}_{j+1} \mid \overline{D}_{j+1},\overline{L}_{j}, A}(L_{j+1}  \mid  0, \overline{L}_{j},  a_D; \hat{\alpha}_{L})}{\prod_{j=0}^{K-1} f_{\overline{L}_{j+1} \mid \overline{D}_{j+1},\overline{L}_{j}, A}(L_{j+1}  \mid  0, \overline{L}_{j}, a_Y; \hat{\alpha}_{L})}, 
\end{align*}
where $\Pr(D_{j+1}=0 \mid   D_j=0,\overline{L}_{j},  A = a; \alpha_D)$ is a pooled over time model for $\Pr(D_{j+1}=0 \mid   D_j=0, \overline{L}_{j},  A = a)$ indexed by the parameter $ \alpha_D$, and $\hat{\alpha}_D$ its MLE. The parameter $\alpha_{L}$ analogously indexes a pooled over time model for the conditional density $f_{\overline{L}_{j+1} \mid \overline{D}_{j+1},\overline{L}_{j}, A}(L_{j+1}  \mid  0, \overline{L}_{j}, a)$, with $\hat{\alpha}_{L}$ its MLE. Off the shelf software can be used to estimate  $\Pr(D_{j+1}=0 \mid   D_j=0,\overline{L}_{j},  A = a)$.  More complex computation (and strong parametric assumptions) may be required to consistently estimate $f_{\overline{L}_{j+1} \mid \overline{D}_{j+1},\overline{L}_{j}, A}(L_{j+1}  \mid  0, \overline{L}_{j}, a)$ when $L_j$ is high-dimensional.   In this case, we can alternatively fit a model directly for the likelihood ratio 
$$\frac{f_{\overline{L}_{j+1} \mid \overline{D}_{j+1},\overline{L}_{j}, A}(L_{j+1}  \mid  0, \overline{L}_{j},  a_D)}{f_{\overline{L}_{j+1} \mid \overline{D}_{j+1},\overline{L}_{j}, A}(L_{j+1}  \mid  0, \overline{L}_{j}, a_Y)},$$
for $j \in \{0, \dots, K\}$, e.g. a proportional likelihood ratio model as a more parsimonious function of $\overline{L}_j$ and possibly $j$ which avoids distributional assumptions on $\overline{L}_{j+1}$ \cite{luo2012proportional}.


\subsection{Doubly robust estimator}
\label{sec: doubly robust}
We can alternatively consider a doubly robust estimator derived from the nonparametric influence function in Appendix \ref{app sec: influence doubly} \cite{robins1994estimation, van2003unified, van2000asymptotic}. This estimator $\hat{\nu}_{dr,a_Y,a_D}$ is the solution to the estimating equation

\begin{align*}
U_{dr}(\nu_{a_Y,a_D},\hat{\alpha},\hat{\theta}) & = \Bigg\{ \frac{I( A=a_D) }{P( A=a_D)} (1-D_{K+1}) \mathbb{E}(Y \mid D_{K+1}=0, A=a_Y, \overline{L}_{K};\hat{\theta})  \\  
   & \  +  \frac{I( A=a_Y)}{P( A=a_Y)} \frac{f_{\underline{L}_1,\overline{D}_{K+1} \mid L_0, A}(\underline{L}_{1},0 \mid L_0, a_D);\hat{\alpha})}{f_{\underline{L}_1,\overline{D}_{K+1} \mid L_0, A}(\underline{L}_{1},0 \mid L_0, a_Y;\hat{\alpha})} (1- D_{K+1}) \\
   &  \ \times [ Y - \mathbb{E}( Y \mid D_{K+1}=0,A=a_Y, \overline{L}_{K})   ]   \Bigg\} \frac{1}{\hat{\mathbb{E}}(  1-D_{K+1} \mid    A=a_D )} \\
   & -  \nu_{a_Y,a_D}, %
\end{align*}
with $\hat{\mathbb{E}}(  1-D_{K+1} \mid    A=a_D)$ simply a sample mean. In Appendix \ref{app sec: influence doubly} we show that $\hat{\nu}_{dr,a_Y,a_D}$  is doubly robust; that is, it is consistent if either the models for $f_{\underline{L}_1,\overline{D}_{K+1} \mid L_0, A}(\underline{L}_{1},0 \mid L_0, a)$ or $\mathbb{E}(Y \mid D_{K+1}=0, \overline{L}_{K},  A = a)$ are correctly specified, but not necessarily both. Analogous to the weighted estimator $\hat{\nu}_{IPW,a_Y,a_D} $, consistency of $\hat{\nu}_{dr,a_Y,a_D}$ may also be acheived by correctly specifying a model for the likelihood ratio in place of $f_{\underline{L}_1,\overline{D}_{K+1} \mid L_0, A}(\underline{L}_{1},0 \mid L_0, a)$, as described in the previous section.  Note that the estimating equation for $\hat{\nu}_{dr,a_Y,a_D}$ can serve as the basis for constructing estimators that use machine learning to estimate nuisance parameters in place of parametric models. 



\section{Data example: the Southwest Oncology Group Trial}
\label{sec: data example}

As an illustration, we analyzed data from the Southwest Oncology Group (SWOG) Trial \cite{petrylak2004docetaxel} that randomly assigned men with refractory prostate cancer to either of two chemotherapies, Docetaxel and Estramustine (DE) or Mitoxantrone and Prednisone (MP). Our dataset, which included 487 patients aged 47 to 88 years, has previously been used to compare outcomes under DE versus MP on health related quality of life in the principal stratum of always survivors \cite{wang2017identification, ding2011identifiability, yang2018using}, that is, to estimate the survivor average causal effect. Yet, the practical relevance of the survivor average causal effect is ambiguous, as we discussed in Section \ref{sec: Ay partial meaningful eff}.

The conditional separable effects quantify notions of causal mechanisms and can be used to conceive improved treatments. Thus, our target of inference was a conditional separable effect of DE versus MP on the outcome $Y$, defined as the change in quality of life between baseline and one year of follow-up, similar to the previous reports evaluating the SACE \cite{wang2017identification, ding2011identifiability, yang2018using}. We considered a modification of the original treatments DE ($A=1$) and MP ($A=0$), where $A_D=1$ indicates receiving only the Docetaxel component of DE and $A_D=0$ indicates receiving only Mitoxantrone component of MP. Both Docetaxel and Mitoxantrone are chemotherapeutic agents that can reduce proliferation of cancer cells, and, potentially, progression of the disease ($Z_k$). Further, let $A_Y=1$ indicate receiving only Estramustine component of DE ($A_Y=1$) and $A_Y=0$ indicate receiving only the Prednisone component of MP. Assume that $A_Y$ partial isolation is satisfied. Clearly, this condition is not guaranteed by design, but may be plausible for these choices of $A_Y$ and $A_D$.  Specifically, neither Estramustine, which consists of estrogen and chemotherapeutic medication, nor Prednisone, a steroid, are given to reduce mortality. However, these components can potentially give pain relief (palliation) \cite{petrylak2005future}, but may also have side-effects, including nausea, fatigue and vomiting \cite{autio2012therapeutic, petrylak2004docetaxel, petrylak2005future}, which in turn can affect quality of life. Therefore, it is not clear which of these components are most effective at improving quality of life.

The causal structure of this example is represented in Figure \ref{fig: isolation conditions}c, satisfying $A_Y$ partial isolation, where the  causal path $A_D \rightarrow Z_1 \rightarrow Y$ illustrates that $A_D$, the chemotherapeutic components Docetaxel or Mitoxantrone, can affect quality of life but only through effects on the cancer progression indicator $Z_k$.





After $K+1=12$ months of follow-up, 0.79 (95\% CI:  $[0.73, 0.84]$) survived in the DE arm compared to 0.72 (95\% CI:  $[0.66, 0.78]$) in the MP arm, resulting in an estimated additive causal effect of 0.07 (95\% CI:  $[-0.01, 0.15]$) of treatment $A$ on 12-month survival ($D_{12}$). Of the 368 survivors at 12 months, 152 had a measure of the outcome quality of life measured (i.e.\ were uncensored). All analyses were adjusted for the measured baseline covariates age, race, anticipated of prognosis, bone pain and performance status ($L_0$) as well as an indicator of disease progression by $k$ ($L_k$). These point estimates are in line with a possible improvement in mortality under DE compared to MP.  By our hypotheses above, this suggests that the effect of a modified version of DE that replaces the Estramustine component with Prednisone (i.e. the joint treatment $A_D = 1,A_Y=0$) is of interest with respect to quality of life.

The estimated mean of $Y$ among survivors was -4.4 (95\% CI:  $[-8.6, -0.5]$) units in the DE arm ($a_D=a_Y=1$) and  -9.1 (95\% CI:  $[-14.0, -3.9]$) units in the ME arm ($a_D=a_Y=0$) (Table \ref{tab: effect estimates}). However, we cannot assign a causal interpretation to a contrast of these estimates, as discussed in Section \ref{sec: intro}. Therefore we estimated the conditional separable effect for $a_D = 1$ (i.e. the comparison of outcomes under Estramustine versus Prednisone both given with Docetaxel) using the regression estimator $\hat{\nu}_{or,a_Y,a_D} $, the  weighted estimator $\hat{\nu}_{ipw,a_Y,a_D} $ based on the weights $ \hat{W} (a_Y,a_D;\hat{\alpha})$ as well as the doubly-robust estimator $\hat{\nu}_{dr,a_Y,a_D}$.  We assumed the following parametric models for the nuisance parameter $f_{\underline{L}_1,\overline{D}_{K+1} \mid L_0, A}(\underline{L}_{1},0 \mid L_0, a_D)$, 
\begin{align*}
& \text{logit} [\Pr(D_{12}=1 \mid A=0, L_0,L_{11}; \alpha_{D_1})]  = \alpha_{D,0} + \alpha'_{D,1}L_0 +\alpha_{D,2}L_{11}, \nonumber \\
& \text{logit} [\Pr(D_{12}=1 \mid A=1, L_0,L_{11}; \alpha_{D_2})]  = \alpha_{D,3} + \alpha'_{D,4}L_0 +\alpha_{D,5}L_{11}, \\ 
& \text{logit} [\Pr(L_{11}=1 \mid A, L_0; \alpha_{L})] = \alpha_{L,0} + \alpha_{L,1}A +\alpha'_{L,2}L_0,
\end{align*}
and the following model for $\mathbb{E}(Y \mid D_{12}=0, A=a_Y, \overline{L}_{11}=\overline{l}_{11})$,
\begin{align*}
& \mathbb{E}(Y \mid D_{K+1}=0, A=a, \overline{L}_{K}=\overline{l}_{K}; \theta) = \theta_{0} + \theta_{1}A +\theta'_{2}L_0 + +\theta_{3}L_{11}.
\end{align*}

Under the hypothetical treatment that contains the component of DE that affects mortality (Docetaxel, $a_D=1$), but contains the component of MP that potentially affects quality of life outside of mortality (Prednisone, $a_Y=0$), the estimated mean outcome after 12 months was $-6.1$ ($95\%$ CI $[-11.8, -0.7]$) using the simple regression estimator $\hat{\nu}_{or,a_Y,a_D}$ (Table \ref{tab: effect estimates}). The estimated additive conditional separable effect for $a_D=1$ was equal to $-1.7$ ($95\%$ CI $[-5.1, 8.1]$).

Similarly, using the simple weighted estimator $\hat{\nu}_{ipw,a_Y,a_D}$ the estimated mean outcome after 12 months was $-5.5$ ($95\%$ CI $[-9.7, -0.4]$) (Table \ref{tab: effect estimates}), and the estimated additive conditional separable effect for $a_D=1$ was equal to $-1.1$ ($95\%$ CI $[-5.6, 6.6]$). We also implemented the doubly robust estimator $\hat{\nu}_{dr,a_Y,a_D}$, which gave an estimated outcome mean of $-5.6$ ($95\%$ CI $[-11.5, -0.2]$) (Table \ref{tab: effect estimates}), and an estimated conditional separable effect for $a_D=1$ equal to $-1.7$ ($95\%$ CI $[-5.1, 8.4]$). Thus, the estimates from $\hat{\nu}_{or,a_Y,a_D}$, $\hat{\nu}_{ipw,a_Y,a_D}$ and $\hat{\nu}_{dr,a_Y,a_D}$ are similar. The confidence intervals are wide, and there is no clear evidence that the modified drug with Docetaxel and Prednisone would lead to improved quality of life compared to DE in this study. These conclusions, however, rely on the assumption of $A_Y$ partial isolation and the identifiability conditions in Section \ref{sec: identifiability conditions}. In particular, the presence of unmeasured common causes of mortality and quality of life would violate these assumptions. 

As discussed in Section \ref{sec: sep eff vs ps}, under the additional assumption of monotonicity, our results also allow identification and estimation of the SACE. However, we have already argued that the practical relevance of the SACE is ambiguous. Under the additional assumption of a monotonic effect of $a_D$ on mortality, i.e.\ $D^{a_D=1}_{12} \leq D^{a_D=0}_{12}$, the SACE is identified by the same functional as the conditional separable effect where $a_D$ is fixed to 0,
\begin{align*}
    \mathbb{E}(Y^{a_Y=0,a_D=0} \mid D^{a_Y=0,a_D=0}_{12}=0) - \mathbb{E}(Y^{a_Y=1,a_D=0} \mid D^{a_Y=1,a_D=0}_{12}=0),
\end{align*}
that is, the conditional separable effect among those who would survive under Mitoxantrone, regardless of whether they received Prenisone or Estramustine. In contrast, we have argued that the separable effect among those who would survive under Docetaxel (where $a_D$ is fixed to 1) is of primary interest. Furthermore, we do not believe that monotonicity holds in our example, i.e.\ that Mitoxantrone does not improve survival in any single individual: different individuals can experience different effects of Docetaxel and Mitoxantrone on survival, e.g.\ due to differences in the cancer cells (subtypes) across individuals.  

The only form of censoring in this data set was due to missing values of $Y$.  In Appendix \ref{sec: appendix censoring},  we describe how censoring was adjusted for in the analysis. 

\begin{table}[ht]
\def~{\hphantom{0}}
\begin{tabular}{llrl}
 \\
 \hline
\textbf{Estimand} & \textbf{Estimator} & \textbf{Estimate} & \textbf{(95\% CI)} \\
 \hline 
$\mathbb{E}(Y^{a=1} \mid D^{a=1}_{12}=0)$ & Non-parametric & -4.4 & (-8.6, -0.5) \\ 
  $\mathbb{E}(Y^{a=0} \mid D^{a=0}_{12}=0)$ & Non-parametric & -9.1 & (-14.0, -3.9) \\ 
    $\mathbb{E}(Y^{a_Y=0,a_D=1} \mid D^{a_Y=0,a_D=1}_{12}=0)$ & $\hat{\nu}_{or,a_Y,a_D} $ &  -6.1 & (-11.8, -0.7)  \\ 
  $\mathbb{E}(Y^{a_Y=0,a_D=1} \mid D^{a_Y=0,a_D=1}_{12}=0)$ & $\hat{\nu}_{ipw,a_Y,a_D} $  & -5.5 & (-9.7, -0.4) \\ 
   $\mathbb{E}(Y^{a_Y=0,a_D=1} \mid D^{a_Y=0,a_D=1}_{12}=0)$ & $\hat{\nu}_{dr,,a_Y,a_D} $  & -5.6 & (-11.5, -0.2) \\ 
  \\
\end{tabular}
\caption{Estimates of changes in health related quality of life 12 months after baseline. The percentile based confidence intervals are derived from 500 bootstrap samples.}
\label{tab: effect estimates}
\end{table}

\section{Discussion}
\label{sec: discussion}
This work presents new estimands, conditional separable effects, in settings where the outcome of interest is only defined conditional on a post-treatment event status. To define the conditional separable effects, we relied on the assumption of $A_Y$ partial isolation and the existence of two modified treatments that exert effects through distinct causal pathways \cite{robins2010alternative, stensrud2019separable,stensrud2019generalized, didelez2018defining}. 

Principal stratum effects have often been used to quantify causal effects on an outcome that is only of interest conditional on a post-treatment variable, such as survival \cite{robins1986new, frangakis2002principal} or having an infection \cite{hudgens2003analysis}, but the practical relevance of these estimands has been questioned, because (i) they are defined in an unknown subpopulation that may not even exist \cite{robins2007principal,joffe2011principal, young2018choice}, and (ii) estimating the principal stratum effects requires unfalsifiable assumptions that cannot be verified even in any randomized experiment \cite{robins2010alternative}. 

Unlike principal stratum effects, the conditional separable effects can be identified without relying on unfalsifiable independence assumptions or hypothetical interventions to prevent death; the conditional separable effects can be identified in a Finest Fully Randomized Causally Interpreted Structured Tree Graph (FFRCISTG) model \cite{robins1986new, robins2010alternative, richardson2013single}.  As such, the identifiability conditions can be scrutinized in a future experiment in which the treatment components $A_Y$ and $A_D$ are randomly assigned. Unlike principal stratum effects, the conditional separable effects are not restricted to an unknown subpopulation of unknown size, but the two versions of the conditional separable effects (evaluated under either $a_D=0$ or $a_D=1$) are defined in the (observable) subsets of individuals who would have a particular value of a post-treatment variable under assignment $A=0$ and $A=1$, respectively. 

Thus, the conditional separable effects overcome key concerns with principal stratum estimands, such as the survivor average causal effects. Yet the definition of the conditional separable effects hinges on $A_Y$ partial isolation, which must be justified in each practical setting. However, this is not a limitation of the estimand itself: unless we conceive plausible modified treatments meeting $A_Y$ partial isolation, it is not clear how to disentangle treatment effects on the post-treatment variable (e.g.\ death) and treatment effects on the outcome of interest. Furthermore, our results help researchers clarify their thinking about future, potentially improved treatments. The study of these future treatments is of scientific and public health interest, even if $A_Y$ partial isolation is falsified in a future experiment evaluating these treatments.





\clearpage
\bibliography{references}
\bibliographystyle{unsrt}

\clearpage

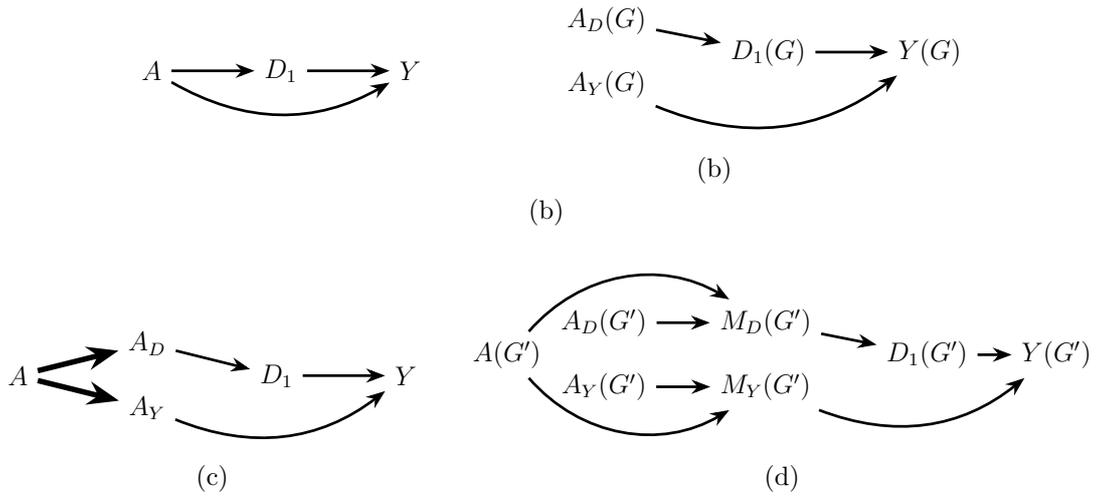
\begin{figure}
    \centering
\subfloat[]{
\scalebox{0.85}{
\begin{tikzpicture}
\begin{scope}[every node/.style={thick,draw=none}]
    \node (A) at (1,0) {$A$};
    \node (D1) at (3,0) {$D_1$};
    \node (Y) at (5,0) {$Y$};
     \node (NULL) at (1,-1) {$ $};
\end{scope}

\begin{scope}[>={Stealth[black]},
              every node/.style={fill=white,circle},
              every edge/.style={draw=black,very thick}]
	\path [->] (A) edge (D1);
	\path [->] (A) edge[bend right] (Y);
    \path [->] (D1) edge (Y);
\end{scope}
\end{tikzpicture}
} 
\subfloat[]{
\scalebox{0.85}{
\begin{tikzpicture}
\begin{scope}[every node/.style={thick,draw=none}]
    \node (A) at (-1,0) {$ $};
    \node (Ay) at (1,-0.5) {$A_Y(G)$};
	\node (Ad) at (1,0.5) {$A_D(G)$};
    \node (D1) at (3.5,0) {$D_1(G)$};
    \node (Y) at (6,0) {$Y(G)$};
\end{scope}

\begin{scope}[>={Stealth[black]},
              every node/.style={fill=white,circle},
              every edge/.style={draw=black,very thick}]
	\path [->] (Ad) edge (D1);
	\path [->] (Ay) edge[bend right] (Y);
    \path [->] (D1) edge (Y);
\end{scope}
\end{tikzpicture}
}}} 
\\ 
\subfloat[]{
\scalebox{0.85}{
\begin{tikzpicture}
\begin{scope}[every node/.style={thick,draw=none}]
    \node (A) at (-1,0) {$A$};
    \node (Ay) at (1,-0.5) {$A_Y$};
	\node (Ad) at (1,0.5) {$A_D$};
    \node (D1) at (3,0) {$D_1$};
    \node (Y) at (5,0) {$Y$};
\end{scope}

\begin{scope}[>={Stealth[black]},
              every node/.style={fill=white,circle},
              every edge/.style={draw=black,very thick}]
    \path [->] (A) edge[line width=0.85mm] (Ad);
    \path [->] (A) edge[line width=0.85mm] (Ay);
	\path [->] (Ad) edge (D1);
	\path [->] (Ay) edge[bend right] (Y);
    \path [->] (D1) edge (Y);
\end{scope}
\end{tikzpicture}
} }
\subfloat[]{
\scalebox{0.85}{
\begin{tikzpicture}
\begin{scope}[every node/.style={thick,draw=none}]
    \node (A) at (-1.5,0) {$A(G')$};
    \node (Ay) at (0,-0.5) {$A_Y(G')$};
	\node (Ad) at (0,0.5) {$A_D(G')$};
    \node (My) at (2.5,-0.5) {$M_Y(G')$};
	\node (Md) at (2.5,0.5) {$M_D(G')$};
    \node (D1) at (5,0) {$D_1(G')$};
    \node (Y) at (7,0) {$Y(G')$};
\end{scope}

\begin{scope}[>={Stealth[black]},
              every node/.style={fill=white,circle},
              every edge/.style={draw=black,very thick}]
    \path [->] (A) edge[bend left=40] (Md);
    \path [->] (A) edge[bend right=40] (My);
    \path [->] (Ay) edge (My);
    \path [->] (Ad) edge (Md);
	\path [->] (Md) edge (D1);
	\path [->] (My) edge[bend right] (Y);
    \path [->] (D1) edge (Y);
\end{scope}
\end{tikzpicture}
} }
\\ 
\caption{\small The graph in (a) is a simple causal DAG allowing no common cause of $Y$ and $D_1$. The DAG in (b) represents a four-arm trial $G$ assigning two treatments $A_Y$ and $A_D$ consistent with $A_Y$ partial isolation. The graph in (c) is consistent with the modified treatment assumption for this choice of $A_Y$ and $A_D$ via the treatment decomposition assumption \eqref{assumption: Determinsm} from Robins and Richardson \cite{robins2010alternative}, where the bold arrows describe the deterministic relation between $A$, $A_Y$ and $A_D$. The graph in (d) describes a necessary condition for the modified treatment assumption.}
\label{fig: decomposition 1}
\end{figure}


\begin{figure}
    \centering
\subfloat[]{
\scalebox{0.85}{
\begin{tikzpicture}
\begin{scope}[every node/.style={thick,draw=none}]
    \node (A) at (0,0) {$A$};
    \node (D1) at (3,0) {$D_1$};
    \node (D2) at (5,0) {$D_2$};
    \node (Y) at (7,0) {$Y$};
    \node (Z1) at (4,2) {$Z_1$};
    \node (Z0) at (1,2) {$Z_0$};
\end{scope}

\begin{scope}[>={Stealth[black]},
              every node/.style={fill=white,circle},
              every edge/.style={draw=black,very thick}]
 	\path [->] (A) edge[bend left]  (D1);
	\path [->] (A) edge[bend left] (D2);
	\path [->] (A) edge[bend right] (Y);
    \path [->] (Z1) edge (Y);
    \path [->] (D1) edge (Z1);
    \path [->] (D1) edge (D2);
    \path [->] (D2) edge (Y);
    \path [->,>={Stealth[lightgray]}] (Z0) edge[lightgray] (Z1);
    \path [->] (Z1) edge (D2);
    \path [->,>={Stealth[lightgray]}] (Z0) edge[lightgray]  (D1);
    \path [->,>={Stealth[lightgray]}] (Z0) edge[lightgray]  (Y);
    \path [->,>={Stealth[blue]}] (A) edge[blue] (Z1);
\end{scope}
\end{tikzpicture}
}}    
\subfloat[]{
\scalebox{0.85}{
\begin{tikzpicture}
\begin{scope}[every node/.style={thick,draw=none}]
    \node (Ay) at (1,-1) {$A_Y$};
	\node (Ad) at (1,1) {$A_D$};
    \node (D1) at (3,0) {$D_1$};
    \node (D2) at (5,0) {$D_2$};
    \node (Y) at (7,0) {$Y$};
    \node (Z1) at (4,2) {$Z_1$};
    \node (Z0) at (1,2) {$Z_0$};
\end{scope}

\begin{scope}[>={Stealth[black]},
              every node/.style={fill=white,circle},
              every edge/.style={draw=black,very thick}]
	\path [->] (Ad) edge (D1);
	\path [->] (Ad) edge (D2);
	\path [->] (Ay) edge[bend right] (Y);
    \path [->] (Z1) edge (Y);
    \path [->] (D1) edge (Z1);
    \path [->] (D1) edge (D2);
    \path [->] (D2) edge (Y);
    \path [->,>={Stealth[lightgray]}] (Z0) edge[lightgray] (Z1);
    \path [->] (Z1) edge (D2);
    \path [->,>={Stealth[lightgray]}] (Z0) edge[lightgray]  (D1);
    \path [->,>={Stealth[lightgray]}] (Z0) edge[lightgray]  (Y);
\end{scope}
\end{tikzpicture}
}} \\

\subfloat[]{
\scalebox{0.85}{
\begin{tikzpicture}
\begin{scope}[every node/.style={thick,draw=none}]
    \node (Ay) at (1,-1) {$A_Y$};
	\node (Ad) at (1,1) {$A_D$};
    \node (D1) at (3,0) {$D_1$};
    \node (D2) at (5,0) {$D_2$};
    \node (Y) at (7,0) {$Y$};
    \node (Z1) at (4,2) {$Z_1$};
    \node (Z0) at (1,2) {$Z_0$};
\end{scope}

\begin{scope}[>={Stealth[black]},
              every node/.style={fill=white,circle},
              every edge/.style={draw=black,very thick}]
	\path [->] (Ad) edge (D1);
	\path [->] (Ad) edge (D2);
	\path [->] (Ay) edge[bend right] (Y);
    \path [->] (Z1) edge (Y);
    \path [->] (D1) edge (Z1);
    \path [->] (D1) edge (D2);
    \path [->] (D2) edge (Y);
    \path [->,>={Stealth[lightgray]}] (Z0) edge[lightgray] (Z1);
    \path [->] (Z1) edge (D2);
    \path [->,>={Stealth[lightgray]}] (Z0) edge[lightgray]  (D1);
    \path [->,>={Stealth[lightgray]}] (Z0) edge[lightgray]  (Y);
    \path [->,>={Stealth[blue]}] (Ad) edge[blue] (Z1);
\end{scope}
\end{tikzpicture}
} }
\subfloat[]{
\scalebox{0.85}{
\begin{tikzpicture}
\begin{scope}[every node/.style={thick,draw=none}]
    \node (Ay) at (1,-1) {$A_Y$};
	\node (Ad) at (1,1) {$A_D$};
    \node (D1) at (3,0) {$D_1$};
    \node (D2) at (5,0) {$D_2$};
    \node (Y) at (7,0) {$Y$};
    \node (Z1) at (4,2) {$Z_1$};
    \node (Z0) at (1,2) {$Z_0$};
\end{scope}

\begin{scope}[>={Stealth[black]},
              every node/.style={fill=white,circle},
              every edge/.style={draw=black,very thick}]
	\path [->] (Ad) edge (D1);
	\path [->] (Ad) edge (D2);
	\path [->] (Ay) edge[bend right] (Y);
    \path [->] (Z1) edge (Y);
    \path [->] (D1) edge (Z1);
    \path [->] (D1) edge (D2);
    \path [->] (D2) edge (Y);
    \path [->,>={Stealth[lightgray]}] (Z0) edge[lightgray] (Z1);
    \path [->] (Z1) edge (D2);
    \path [->,>={Stealth[lightgray]}] (Z0) edge[lightgray]  (D1);
    \path [->,>={Stealth[lightgray]}] (Z0) edge[lightgray]  (Y);
    \path [->,>={Stealth[blue]}] (Ay) edge[blue, bend left] (Z1);
\end{scope}
\end{tikzpicture}
}}
\caption{\small The graph in (a) is a classical DAG with no assumption about modified treatments. The graphs in (b)-(d) are DAGs describing four-arm trials, i.e. "$G$"s, that illustrate the isolation assumptions for modified treatments $A_Y$ and $A_D$, where we have omitted $(G)$ in each node to avoid clutter. Full isolation holds in (b), only $A_Y$ partial isolation holds in (c) and only $A_D$ partial isolation holds in (d).}
\label{fig: isolation conditions}
\end{figure}
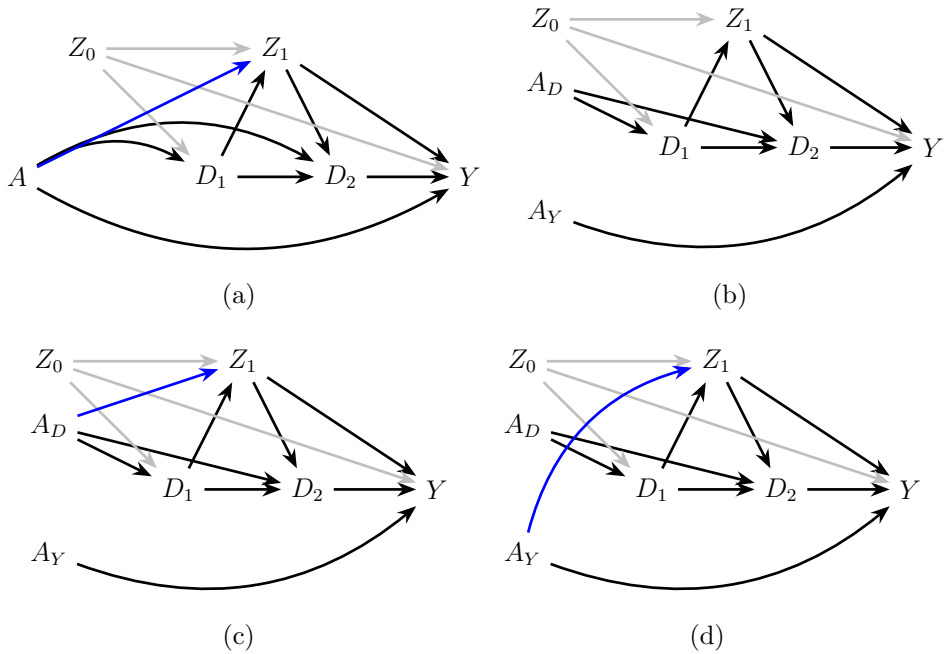

\begin{figure}
\centering
\subfloat[]{
\scalebox{0.85}{
\begin{tikzpicture}
\tikzset{line width=1.5pt, outer sep=0pt,
ell/.style={draw,fill=white, inner sep=2pt,
line width=1.5pt},
swig vsplit={gap=5pt,
inner line width right=0.5pt}};
\node[name=Ay,shape=swig vsplit] at (0,-1){
\nodepart{left}{$A_Y$}
\nodepart{right}{$a_Y$} };
\node[name=Ad, 
shape=swig vsplit] at (0,1) {
\nodepart{left}{$A_D$}
\nodepart{right}{$a_D$} };
\node[name=D1,ell,  shape=ellipse] at (3,0) {$D^{a_D}_1$}  ;
    \node[name=D2,ell,  shape=ellipse]  at (5.5,0) {$D^{a_D}_2$};
    \node[name=Y,ell,  shape=ellipse] (Y) at (8,0) {$Y^{a_Y,a_D}$};
    \node[name=Z1,ell,  shape=ellipse] at (5,2.5) {$Z^{a_D}_1$};
    \node[name=Z0,ell,  shape=ellipse] at (1,2.5) {$Z_0$};
\begin{scope}[>={Stealth[black]},
              every edge/.style={draw=black,very thick}]
	\path [->] (Ad) edge (D1);
	\path [->] (Ad) edge[bend left=10] (D2);
	\path [->] (Ay) edge[bend right] (Y);
    \path [->] (Z1) edge (Y);
    \path [->] (D1) edge (Z1);
    \path [->] (D1) edge (D2);
    \path [->] (D2) edge (Y);
    \path [->,>={Stealth[lightgray]}] (Z0) edge[lightgray] (Z1);
    \path [->] (Z1) edge (D2);
    \path [->,>={Stealth[lightgray]}] (Z0) edge[lightgray]  (D1);
    \path [->,>={Stealth[lightgray]}] (Z0) edge[lightgray]  (Y);
    \path [->,>={Stealth[blue]}] (Ad) edge[blue] (Z1);
\end{scope}
\end{tikzpicture}
}
} \\
\subfloat[]{
\scalebox{0.85}{
\begin{tikzpicture}
\tikzset{line width=1.5pt, outer sep=0pt,
ell/.style={draw,fill=white, inner sep=2pt,
line width=1.5pt},
swig vsplit={gap=5pt,
inner line width right=0.5pt}};
\node[name=Ay,shape=swig vsplit] at (0,-1){
\nodepart{left}{$A_Y$}
\nodepart{right}{$a_Y$} };
\node[name=Ad, 
shape=swig vsplit] at (0,1) {
\nodepart{left}{$A_D$}
\nodepart{right}{$a_D$} };
\node[name=D1,ell,  shape=ellipse] at (3,0) {$D^{a_D}_1$}  ;
    \node[name=D2,ell,  shape=ellipse]  at (5.5,0) {$D^{a_Y,a_D}_2$};
    \node[name=Y,ell,  shape=ellipse] (Y) at (8,0) {$Y^{a_Y,a_D}$};
    \node[name=Z1,ell,  shape=ellipse] at (5,2.5) {$Z^{a_Y,a_D}_1$};
    \node[name=Z0,ell,  shape=ellipse] at (1,2.5) {$Z_0$};
\begin{scope}[>={Stealth[black]},
              every edge/.style={draw=black,very thick}]
	\path [->] (Ad) edge (D1);
	\path [->] (Ad) edge[bend left=10] (D2);
	\path [->] (Ay) edge[bend right] (Y);
    \path [->] (Z1) edge (Y);
    \path [->] (D1) edge (Z1);
    \path [->] (D1) edge (D2);
    \path [->] (D2) edge (Y);
    \path [->,>={Stealth[lightgray]}] (Z0) edge[lightgray] (Z1);
    \path [->] (Z1) edge (D2);
    \path [->,>={Stealth[lightgray]}] (Z0) edge[lightgray]  (D1);
    \path [->,>={Stealth[lightgray]}] (Z0) edge[lightgray]  (Y);
    \path [->,>={Stealth[blue]}] (Ay) edge[blue] (Z1);
\end{scope}
\end{tikzpicture}
}
}
    \caption{\small Single World Intervention Graphs (SWIGs). The SWIG in (a) is a transformation of the causal DAG in Figure \ref{fig: isolation conditions}c under an intervention $a_Y,a_D$, where $A_Y$ partial isolation holds. The SWIG in (b) is a corresponding transformation of the causal DAG in Figure \ref{fig: isolation conditions}d, where $A_Y$ partial isolation fails. Unlike in the graph (a), we cannot remove $a_Y$ from the superscripts of $Z_1
   ^{a_Y,a_D}$ $D^{a_Y,a_D}_2$ in (b) by minimal labelling \cite{richardson2013single}.}
    \label{fig: swig minimal labelling}
\end{figure}
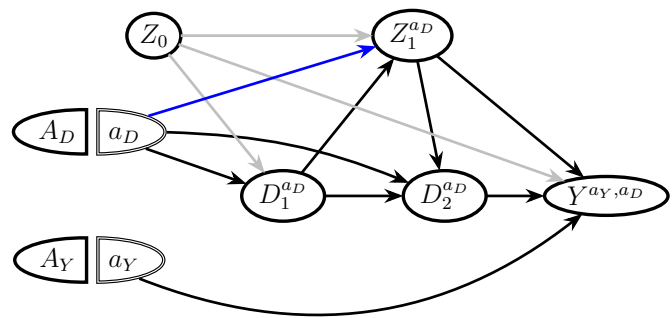
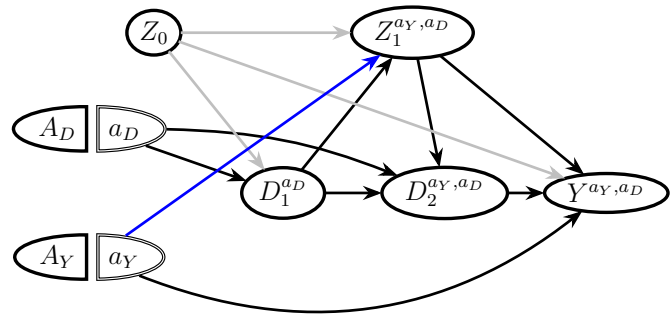

\begin{figure}
    \centering
\subfloat[]{
\begin{tikzpicture}
\begin{scope}[every node/.style={thick,draw=none}]
    \node (Ay) at (1,-1) {$A_Y$};
	\node (Ad) at (1,1) {$A_D$};
    \node (D1) at (3,0) {$D_1$};
    \node (D2) at (5,0) {$D_2$};
    \node (Y) at (7,0) {$Y$};
    \node (L1) at (4,2) {$L_1$};
    \node[lightgray] (ULD) at (2,2) {$U_{L,D}$};
\end{scope}

\begin{scope}[>={Stealth[black]},
              every node/.style={fill=white,circle},
              every edge/.style={draw=black,very thick}]
	\path [->] (Ad) edge (D1);
	\path [->] (Ad) edge (D2);
	\path [->] (Ay) edge[bend right] (Y);
    \path [->] (L1) edge (Y);
    \path [->] (D1) edge (L1);
    \path [->] (D1) edge (D2);
    \path [->] (D2) edge (Y);
    \path [->] (L1) edge (D2);
    \path [->,>={Stealth[lightgray]}] (ULD) edge[lightgray] (L1);
    \path [->,>={Stealth[lightgray]}] (ULD) edge[lightgray] (D2);
\end{scope}
\end{tikzpicture}
}
\subfloat[]{
\begin{tikzpicture}
\begin{scope}[every node/.style={thick,draw=none}]
    \node (Ay) at (1,-1) {$A_Y$};
	\node (Ad) at (1,1) {$A_D$};
    \node (D1) at (3,0) {$D_1$};
    \node (D2) at (5,0) {$D_2$};
    \node (Y) at (7,0) {$Y$};
    \node (L1) at (4,2) {$L_1$};
    \node[lightgray] (ULY) at (6,2) {$U_{L,Y}$};
\end{scope}

\begin{scope}[>={Stealth[black]},
              every node/.style={fill=white,circle},
              every edge/.style={draw=black,very thick}]
	\path [->] (Ad) edge (D1);
	\path [->] (Ad) edge (D2);
	\path [->] (Ay) edge[bend right] (Y);
    \path [->] (L1) edge (Y);
    \path [->] (D1) edge (L1);
    \path [->] (D1) edge (D2);
    \path [->] (D2) edge (Y);
    \path [->] (L1) edge (D2);
    \path [->,>={Stealth[lightgray]}] (ULY) edge[lightgray] (L1);
    \path [->,>={Stealth[lightgray]}] (ULY) edge[lightgray] (Y);
\end{scope}
\end{tikzpicture}
}  \\
\subfloat[]{
\begin{tikzpicture}
\begin{scope}[every node/.style={thick,draw=none}]
    \node (Ay) at (1,-1) {$A_Y$};
	\node (Ad) at (1,1) {$A_D$};
    \node (D1) at (3,0) {$D_1$};
    \node (D2) at (5,0) {$D_2$};
    \node (Y) at (7,0) {$Y$};
    \node (L1) at (4,2) {$L_1$};
    \node[lightgray] (ULD) at (2,2) {$U_{L,D}$};
\end{scope}
\begin{scope}[>={Stealth[black]},
              every node/.style={fill=white,circle},
              every edge/.style={draw=black,very thick}]
	\path [->] (Ad) edge (D1);
	\path [->] (Ad) edge (D2);
	\path [->] (Ay) edge[bend right] (Y);
    \path [->] (L1) edge (Y);
    \path [->] (D1) edge (L1);
    \path [->] (D1) edge (D2);
    \path [->] (D2) edge (Y);
    \path [->] (L1) edge (D2);
    \path [->,>={Stealth[lightgray]}] (ULD) edge[lightgray] (L1);
    \path [->,>={Stealth[lightgray]}] (ULD) edge[lightgray] (D2);
    \path [->,>={Stealth[blue]}] (Ad) edge[blue] (L1);
\end{scope}
\end{tikzpicture}
} 
\subfloat[]{
\begin{tikzpicture}
\begin{scope}[every node/.style={thick,draw=none}]
    \node (Ay) at (1,-1) {$A_Y$};
	\node (Ad) at (1,1) {$A_D$};
    \node (D1) at (3,0) {$D_1$};
    \node (D2) at (5,0) {$D_2$};
    \node (Y) at (7,0) {$Y$};
    \node (L1) at (4,2) {$L_1$};
    \node[lightgray] (ULY) at (6,2) {$U_{L,Y}$};
\end{scope}
\begin{scope}[>={Stealth[black]},
              every node/.style={fill=white,circle},
              every edge/.style={draw=black,very thick}]
	\path [->] (Ad) edge (D1);
	\path [->] (Ad) edge (D2);
	\path [->] (Ay) edge[bend right] (Y);
    \path [->] (L1) edge (Y);
    \path [->] (D1) edge (L1);
    \path [->] (D1) edge (D2);
    \path [->] (D2) edge (Y);
    \path [->] (L1) edge (D2);
    \path [->,>={Stealth[lightgray]}] (ULY) edge[lightgray] (L1);
    \path [->,>={Stealth[lightgray]}] (ULY) edge[lightgray] (Y);
    \path [->,>={Stealth[blue]}] (Ad) edge[blue] (L1);
\end{scope}
\end{tikzpicture}
} \\
\subfloat[]{
\begin{tikzpicture}
\begin{scope}[every node/.style={thick,draw=none}]
    \node (Ay) at (1,-1) {$A_Y$};
	\node (Ad) at (1,1) {$A_D$};
    \node (D1) at (3,0) {$D_1$};
    \node (D2) at (5,0) {$D_2$};
    \node (Y) at (7,0) {$Y$};
    \node (L1) at (4,2) {$L_1$};
    \node[lightgray] (ULY) at (6,2) {$U_{L,Y}$};
    \node[lightgray] (ULD) at (2,2) {$U_{L,D}$};
\end{scope}
\begin{scope}[>={Stealth[black]},
              every node/.style={fill=white,circle},
              every edge/.style={draw=black,very thick}]
	\path [->] (Ad) edge (D1);
	\path [->] (Ad) edge (D2);
	\path [->] (Ay) edge[bend right] (Y);
    \path [->] (L1) edge (Y);
    \path [->] (D1) edge (L1);
    \path [->] (D1) edge (D2);
    \path [->] (D2) edge (Y);
    \path [->] (L1) edge (D2);
    \path [->,>={Stealth[lightgray]}] (ULY) edge[lightgray] (L1);
    \path [->,>={Stealth[lightgray]}] (ULY) edge[lightgray] (Y);
    \path [->,>={Stealth[lightgray]}] (ULD) edge[lightgray] (L1);
    \path [->,>={Stealth[lightgray]}] (ULD) edge[lightgray] (D2);
\end{scope}
\end{tikzpicture}
}
\subfloat[]{
\begin{tikzpicture}
\begin{scope}[every node/.style={thick,draw=none}]
    \node (Ay) at (1,-1) {$A_Y$};
	\node (Ad) at (1,1) {$A_D$};
    \node (D1) at (3,0) {$D_1$};
    \node (D2) at (5,0) {$D_2$};
    \node (Y) at (7,0) {$Y$};
    \node (L1) at (4,2) {$L_1$};
\end{scope}
\begin{scope}[>={Stealth[black]},
              every node/.style={fill=white,circle},
              every edge/.style={draw=black,very thick}]
	\path [->] (Ad) edge (D1);
	\path [->] (Ad) edge (D2);
	\path [->] (Ay) edge[bend right] (Y);
    \path [->] (L1) edge (Y);
    \path [->] (D1) edge (L1);
    \path [->] (D1) edge (D2);
    \path [->] (D2) edge (Y);
    \path [->] (L1) edge (D2);
    \path [->,>={Stealth[blue]}] (Ay) edge[blue, bend left=10] (L1);
    \path [->,>={Stealth[blue]}] (Ad) edge[blue] (L1);
\end{scope}
\end{tikzpicture}
}

\caption{The dismissible component conditions can be studied in causal graphs. Similar to Figure \ref{fig: isolation conditions}, we have omitted the string $(G)$ in each node to avoid clutter. In (a)-(c), the dismissible component conditions hold. In (d)-(e), dismissible component condition \eqref{ass: delta 1} is violated. In (f), dismissible component condition \eqref{ass: delta 3} is violated. }
\label{fig: dismissible component conditions}
\end{figure}
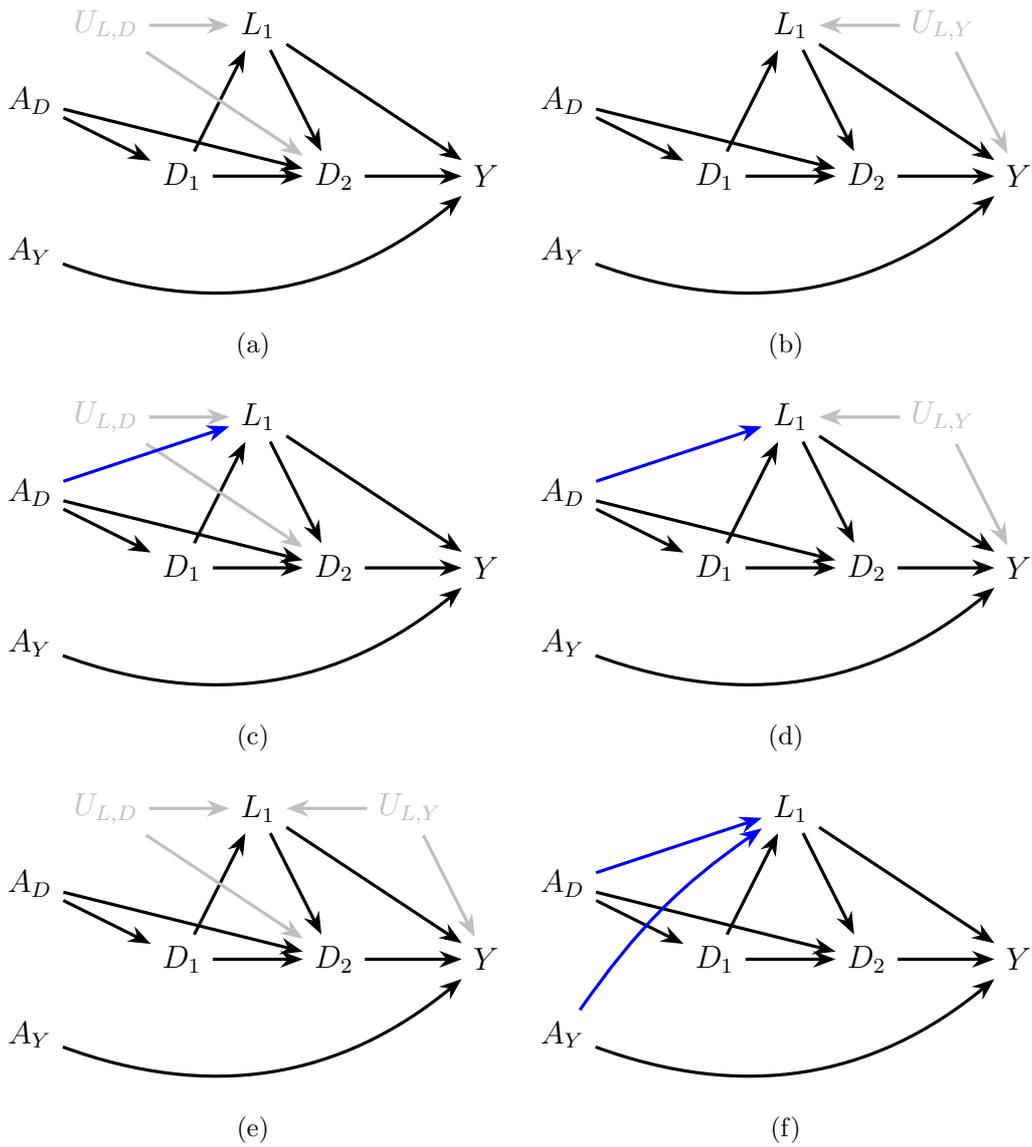

\clearpage
\appendix

\section{Proof of identification formula}
\label{app sec: proof of id}

Here we provide a proof for the identification formula of $\mathbb{E} ( Y^{a_Y,a_D}\mid
D^{a_Y,a_D}_{K+1}=0 )$, which is sufficient for identification of the conditional separable effects in settings with loss to follow-up (censoring). This identification formula co-incide with the identification formula in Malinsky et al \cite{malinsky2019potential} for path specific effects (See also Robins et al \cite{robins2020interventionist}). Denote $C_{k+1}$ an indicator of censoring by $k+1$, $C_0\equiv0$ and assume topological ordering $(C_{k+1},D_{k+1},L_{k+1}$) in each $k=0,\ldots,K$. Consider the following more general exchangeability, consistency and positivity conditions which coincide with those given in Section \ref{sec: identifiability conditions} of the main text when loss to follow-up is absent.  For each $a\in\{0,1\}$:
\begin{enumerate}
\item[1. ] Exchangeability: 
\begin{align}
& Y^{a, \overline{c}=0},\underline{D}_{1}^{a, \overline{c}=0},\underline{L}^{a, \overline{c}=0}_{1} \independent A \mid L_{0} \label{ass: E1 app} \\
&  Y^{a, \overline{c}=0}, \underline{D}^{a, \overline{c}=0}_{k+1}, \underline{L}^{a, \overline{c}=0}_{k+1}  \independent C_{k+1} \mid D_k = \overline{C}_k = 0, \overline{L}_k, A. \label{ass: E2 app}
\end{align}
Condition \eqref{ass: E1 app} holds if $A$ is randomly assigned at baseline, possibly conditional on $L_0$. Condition \eqref{ass: E2 app}  states that losses to follow-up are independent of future counterfactual events, conditional on the measured past. This assumption does not hold by design in a trial where $A$ is randomly assigned, because losses to follow-up are not randomly assigned in practice.
\item[2.] Positivity:  
\begin{align}
& f_{L_0}(l_0)>0\implies   \nonumber \\
& \quad \Pr (A=a\mid  L_0=l_0)>0, \label{eq: positivity 1 app}  \\ 
& f_{\overline{L}_k,C_{k+1},D_{k+1}}(\overline{l}_k,0,0)\neq 0  \text{ for } a\in\{0,1\} \implies \nonumber \\  
& \quad  \Pr(A=a|D_{k+1}=0,\overline{C}_{k+1}=0,\overline{L}_k=\overline{l}_k)>0 \label{eq: positivity 2 app} \\
& f(A = a,D_k=0,\overline{C}_k=0,\overline{L}_k=\overline{l}_k) > 0 \implies \nonumber \\
&\quad \Pr(C_{k+1}=0\mid D_k=0,\overline{C}_k=0,\overline{L}_k=\overline{l}_k,A=a) > 0. \label{eq: positivity 3 app} 
\end{align}
Here, \eqref{eq: positivity 1 app} and \eqref{eq: positivity 2 app} are analogous to the conditions in the main text. Condition \eqref{eq: positivity 3 app} ensures that for any possible history of treatment assignments and covariates among those who are event-free and uncensored at $k$, some individuals will remain uncensored at $k+1$. Condition \eqref{eq: positivity 3 app} is only required when loss to follow-up is present. 
\item[3.] Consistency: 
\begin{align}
  & \text{if } A=a \text{ and } \overline{C}_{k+1} = 0, \nonumber \\
  & \text{then } \overline{Y} = \overline{Y}^{a,  \overline{c}=0}, \overline{D}_{k+1} = \overline{D}^{a,  \overline{c}=0}_{k+1} \text{ and } \overline{L}_{k+1} = \overline{L}^{a,\overline{c}=0}_{k+1}.
  \label{ass: consistency cens}
\end{align}
Consistency is satisfied any individual who has data history consistent with the intervention under a counterfactual scenario, would have an observed outcome that is equal to the counterfactual outcome. 

Also consider the following more general version of the dismissible component conditions where $(L_{A_Y,k+1},L_{A_D,k+1})$ is some partition of $L_{k+1}$:
\item[4.] Dismissible component conditions: 
\begin{align}
 & Y(G) \independent A_D(G) \mid A_Y(G),D_{K+1}(G)=0, \overline{L}_K(G), \label{ass: delta 1 app} \\
 & D_{k+1}(G) \independent A_Y(G) \mid A_D(G), D_{k}(G)=0, \overline{L}_k(G), \label{ass: delta 2 app} \\
 & L_{A_Y,k+1}(G) \independent A_D(G) \mid A_Y(G), D_{k+1}(G)=0, \overline{L}_{k}(G),L_{A_D,k+1}(G),  \label{ass: delta 3a app} \\
 & L_{A_D,k+1}(G) \independent A_Y(G) \mid A_D(G), D_{k+1}(G)=0. \overline{L}_{k}(G)  \label{ass: delta 3b app} 
\end{align}
Following Appendix C of \cite{stensrud2019generalized}, these dismissible component conditions and $A_Y$ partial isolation give the dismissible component conditions of Section \ref{sec: identifiability conditions} in the main text; conditions \eqref{ass: delta 1}-\eqref{ass: delta 3} are equivalent to \eqref{ass: delta 2 app}-\eqref{ass: delta 3b app} under the partition $L_{A_Y,k+1}=L_{k+1}$ and $L_{A_D,k+1}=\emptyset$ .
\end{enumerate}

\begin{theorem}
\label{theorem g formula}
Under the modified treatment assumption, if conditions \eqref{ass: E1 app}-\eqref{ass: delta 3b app} hold, $\mathbb{E}  ( Y^{a_Y,a_D,\overline{c}=0} \mid 
D^{a_Y,a_D,\overline{c}=0}_ {K+1}=0 )$ is identified by
\begin{align}
   &\mathbb{E}  ( Y^{a_Y,a_D,\overline{c}=0} \mid 
D^{a_Y,a_D,\overline{c}=0}_ {K+1}=0) \nonumber \\
= & \sum_{\overline{l}_K} \mathbb{E}  ( Y \mid D_{K+1}=C_{K+1}=0,  \overline{L}_{K}=\overline{l}_{K}, A=a_Y) \nonumber \\
&  \prod_{s=0}^{K} \bigg[ \Pr ( D_{s+1}=0 \mid \overline{L}_{s} = \overline{l}_{s},D_{s}=C_{s+1}=0, A=a_D) \nonumber \\
& \times \Pr ( L_{A_Y,s}=l_{A_Y,s} \mid L_{A_D,s} =  l_{A_D,s}, \overline{L}_{s-1} = \overline{l}_{s-1},D_{s}=C_{s}=0, A =a_Y) \nonumber \\
& \times \Pr ( L_{A_D,s}=l_{A_D,s} \mid \overline{L}_{s-1} = \overline{l}_{s-1},D_{s}=C_{s}=0, A=a_D) \bigg] \nonumber \\
\times & \Bigg[ \sum_{\overline{l}_K} \prod_{s=0}^{K}  \Pr ( D_{s+1}=0 \mid \overline{L}_{s} = \overline{l}_{s},D_{s}=C_{s}=0, A=a_D) \nonumber \\
& \times \Pr ( L_{A_Y,s}=l_{A_Y,s} \mid  L_{A_D,s}=l_{A_D,s}, \overline{L}_{s-1} = \overline{l}_{s-1},D_{s}=C_{s}=0, A =a_Y) \nonumber \\
& \times \Pr ( L_{A_D,s}=l_{A_D,s} \mid \overline{L}_{s-1} =\overline{l}_{s-1},D_{s}=C_{s}=0, A=a_D) \Bigg]^{-1},
\label{eq: id g-formula}
\end{align} 
which reduces to the identification formula \eqref{eq: id formula no cens} when $A_Y$ partial isolation holds, there is no censoring and $A \independent L_0$. 
\label{theorem: identification g formula}
\end{theorem}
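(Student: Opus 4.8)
The plan is to treat the target as a ratio and identify its numerator and denominator separately, since
\[
\mathbb{E}\!\left( Y^{a_Y,a_D,\overline{c}=0} \mid D^{a_Y,a_D,\overline{c}=0}_{K+1}=0 \right) = \frac{\mathbb{E}\!\left( Y^{a_Y,a_D,\overline{c}=0}\, I(D^{a_Y,a_D,\overline{c}=0}_{K+1}=0) \right)}{\Pr\!\left( D^{a_Y,a_D,\overline{c}=0}_{K+1}=0 \right)}.
\]
First I would pass to the four-arm trial $G$, in which $(A_Y,A_D)$ are jointly randomized so that the relevant exchangeability holds by design. Working entirely within $G$, I would apply the standard g-computation / iterated-expectations argument to write the joint counterfactual law of $(\overline{L}_{K+1},\overline{D}_{K+1},Y)$ under the intervention that sets $A_Y=a_Y$, $A_D=a_D$ and $\overline{C}=0$ as a time-ordered product of the observed $G$-distribution's discrete-time hazards of $D$, conditional densities of $L$, and the terminal outcome regression, each conditioned on \emph{both} treatment components and the measured past. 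The censoring indicators are carried through this factorization using the censoring exchangeability \eqref{ass: E2 app} together with positivity \eqref{eq: positivity 3 app}, which permit replacing the counterfactual-under-$\overline{c}=0$ quantities by observed quantities restricted to the uncensored, under consistency \eqref{ass: consistency cens}.

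The central step is to collapse, factor by factor, the dependence on both treatment components down to a single component by invoking the dismissible component conditions \eqref{ass: delta 1 app}--\eqref{ass: delta 3b app}. Condition \eqref{ass: delta 1 app} removes $A_D$ from the outcome regression, leaving a term conditioned only on $A_Y=a_Y$; condition \eqref{ass: delta 2 app} removes $A_Y$ from the hazard of $D$, leaving a term conditioned only on $A_D=a_D$; and conditions \eqref{ass: delta 3a app}--\eqref{ass: delta 3b app} split the covariate density into an $L_{A_Y}$-factor depending only on $A_Y$ and an $L_{A_D}$-factor depending only on $A_D$. After this collapse, every factor in the $G$-g-formula conditions on exactly one of $a_Y$ or $a_D$. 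Note that this argument does not yet require $A_Y$ partial isolation, consistent with the footnote to the theorem.

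Finally I would use the modified treatment assumption \eqref{ass: modif treat} to replace each single-component conditional law in $G$ by the corresponding conditional law in the observed two-arm trial: because assigning a component to value $a$ reproduces the $Y$- and $D$-mechanisms of assigning $A=a$, every factor conditioned on $A_Y=a$ or $A_D=a$ equals the analogous factor conditioned on $A=a$ in the current study. Applying this simultaneously to the numerator (which carries the outcome term) and to the denominator (the survival product alone) yields \eqref{eq: id g-formula}, the denominator producing the inverse-bracket normalizing factor. The reduction to \eqref{eq: id formula no cens} then follows by specializing to the partition $L_{A_D,k}=\emptyset$, $L_{A_Y,k}=L_k$ implied by $A_Y$ partial isolation (per Appendix C of \cite{stensrud2019generalized}), dropping the censoring terms, and absorbing the $A\independent L_0$ factor into $f_{L_0}$.

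I expect the main obstacle to be the careful bookkeeping required to justify the single-component collapse in the correct temporal order while the censoring indicators are threaded through the time-ordered factorization, since each dismissible-component substitution must be applied to the appropriate conditional law at each $k$ and reconciled with the counterfactual consistency and positivity conditions; by contrast, the algebra that combines the numerator and denominator survival products is routine once the factorization and the $G$-to-observed-data substitution are in place.
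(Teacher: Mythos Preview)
Your overall strategy---express the target as a ratio, factorize in time order, use the dismissible-component conditions to collapse each factor to a single treatment index, then link to the two-arm trial via the modified treatment assumption---is the paper's strategy. However, the order in which you invoke the assumptions creates a real gap.

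You propose to first pass to $G$ and write the $(a_Y,a_D,\overline{c}=0)$-counterfactual law as a product of \emph{observed} $G$-conditionals, handling censoring at this stage via \eqref{ass: E2 app} and \eqref{eq: positivity 3 app}. But those conditions are stated for the two-arm trial (they condition on $A$, not on $(A_Y,A_D)$); nothing in the hypotheses supplies censoring exchangeability within $G$, so you cannot identify $\overline{c}=0$-counterfactuals with $G$-observed quantities at this point. The paper sidesteps this by carrying out both the factorization and the dismissible-component/modified-treatment reduction entirely on the $(a_Y,a_D,\overline{c}=0)$-indexed \emph{counterfactuals} themselves (invoking Lemma~1 of \cite{stensrud2019generalized} to translate the $G$-independencies \eqref{ass: delta 1 app}--\eqref{ass: delta 3b app} into equalities of counterfactual conditional hazards), thereby arriving at single-index counterfactuals such as $\Pr(D^{a=a_D,\overline{c}=0}_{s+1}=0\mid \overline{L}^{a=a_D,\overline{c}=0}_s,D^{a=a_D,\overline{c}=0}_s=0)$. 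Only \emph{after} this reduction does the paper invoke \eqref{ass: E1 app}--\eqref{ass: consistency cens}---the only place censoring exchangeability is assumed---iterating over $k$ to equate these single-$a$ counterfactual quantities with observed two-arm data.

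Relatedly, your final step overreaches: the modified treatment assumption by itself only gives $V^{a_Y=a,a_D=a}=V^{a}$; it does not directly equate a $G$-law conditioned on a single component with a two-arm law conditioned on $A=a$. That passage still requires \eqref{ass: E1 app}--\eqref{ass: consistency cens} and positivity applied to the single-$a$ counterfactuals, which in your ordering you have already spent at the wrong stage. Reordering so that the exchangeability/consistency/positivity block is applied last, to single-$a$ counterfactuals, fixes both issues and recovers the paper's argument.
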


\begin{proof}
Assume that conditions \eqref{ass: E1 app}-\eqref{ass: delta 3b app} hold. Using laws of probability, 
\begin{align}
& \mathbb{E}  ( Y^{a_Y,a_D,\overline{c}=0} \mid D^{a_Y,a_D,\overline{c}=0}_ {K+1}=0 ) \nonumber \\
= &  \sum_{\overline{l}_{K}} \mathbb{E}  ( Y^{a_Y,a_D,\overline{c}=0} \mid 
D^{a_Y,a_D,\overline{c}=0}_ {K+1}=0, \overline{L}^{a_Y,a_D,\overline{c}=0}_ {K}=\overline{l}_ {K} ) \Pr(\overline{L}^{a_Y,a_D,\overline{c}=0}_{K}=\overline{l}_{K} \mid 
D^{a_Y,a_D,\overline{c}=0}_ {K+1}=0) \nonumber \\
= &  \sum_{\overline{l}_ {K}} \mathbb{E}  ( Y^{a_Y,a_D,\overline{c}=0} \mid 
D^{a_Y,a_D,\overline{c}=0}_ {K+1}=0,  \overline{L}^{a_Y,a_D,\overline{c}=0}_{K}=\overline{l}_{K} ) \frac{\Pr(\overline{L}^{a_Y,a_D,\overline{c}=0}_{K}=\overline{l}_{K}, D^{a_Y,a_D,\overline{c}=0}_ {K+1}=0)}{\Pr(D^{a_Y,a_D,\overline{c}=0}_ {K+1}=0)} \nonumber \\
= & \frac{\sum_{\overline{l}_ {K}} \mathbb{E}  ( Y^{a_Y,a_D,\overline{c}=0} \mid 
D^{a_Y,a_D,\overline{c}=0}_ {K+1}=0,  \overline{L}^{a_Y,a_D,\overline{c}=0}_{K}=\overline{l}_{K} ) \Pr(\overline{L}^{a_Y,a_D,\overline{c}=0}_{K}=\overline{l}_{K}, D^{a_Y,a_D,\overline{c}=0}_ {K+1}=0)}{\Pr(D^{a_Y,a_D,\overline{c}=0}_ {K+1}=0)} \nonumber \\
\label{eq: counterfactual rep}
\end{align}
Our task now remains to express each term in \eqref{eq: counterfactual rep} as a function of observed data under the stated identifying conditions above. We begin with the term $\Pr(\overline{L}^{a_Y,a_D,\overline{c}=0}_{K}=\overline{l}_{K}, D^{a_Y,a_D,\overline{c}=0}_ {K+1}=0)$.  By probability laws and a partitioning $L_{k+1}=(L_{A_Y,k+1},L_{A_D,k+1})$, $k=0,\ldots,K$, we have:
\begin{align}
& \Pr(\overline{L}^{a_Y,a_D,\overline{c}=0}_{K}=\overline{l}_{K}, D^{a_Y,a_D,\overline{c}=0}_ {K+1}=0) \nonumber \\
= & \Pr( D^{a_Y,a_D,\overline{c}=0}_ {K+1}=0 \mid \overline{L}^{a_Y,a_D,\overline{c}=0}_ {K}=\overline{l}_{K},D^{a_Y,a_D,\overline{c}=0}_ {K}=0) \nonumber \\
& \times \Pr(L_{A_Y,K}^{a_Y,a_D,\overline{c}=0}=l_{A_Y,K} \mid  D^{a_Y,a_D,\overline{c}=0}_ {K}=0, \overline{L}^{a_Y,a_D,\overline{c}=0}_ {K-1}=\overline{l}_{K-1}, L_{A_D,K}^{a_Y,a_D\overline{c}=0}=l_{A_D,K}) \nonumber \\
& \times \Pr(L_{A_D,K}^{a_Y,a_D,\overline{c}=0} = l_{A_D,K} \mid D^{a_Y,a_D,\overline{c}=0}_{K}=0, \overline{L}^{a_Y,a_D,\overline{c}=0}_{K-1}=\overline{l}_{K-1} ) \nonumber \\
& \times \Pr(D^{a_Y,a_D,\overline{c}=0}_ {K}=0, \overline{L}^{a_Y,a_D,\overline{c}=0}_ {K-1} = \overline{l}_{k-1}) \nonumber \\
& \text{ Arguing iteratively:} \nonumber \\
= & \prod_{s=0}^{K} \Pr( D^{a_Y,a_D,\overline{c}=0}_{s+1}=0 \mid \overline{L}^{a_Y,a_D,\overline{c}=0}_{s}=\overline{l}_s,D^{a_Y,a_D,\overline{c}=0}_{s}=0) \nonumber \\
 & \times \Pr(L_{A_Y,s}^{a_Y,a_D,\overline{c}=0}=l_{A_Y,s} \mid  D^{a_Y,a_D,\overline{c}=0}_{s}=0, L^{a_Y,a_D,\overline{c}=0}_{s-1}=l_{s-1},L_{A_D,s}^{a_Y,a_D,\overline{c}=0}=l_{A_D,s}) \nonumber \\ 
 &\times  \Pr(L_{A_D,s}^{a_Y,a_D,\overline{c}=0}=l_{A_D,s}\mid D^{a_Y,a_D,\overline{c}=0}_{s}=0, \overline{L}^{a_Y,a_D,\overline{c}=0}_{s-1} = \overline{l}_{s-1} ) \nonumber \\
& \text{Using the dismissible component conditions$^*$ and the modified treatment assumption$^{**}$:} \nonumber \\
= & \prod_{s=0}^{K}  \Pr( D^{a=a_D,\overline{c}=0}_{s+1}=0 \mid \overline{L}^{a=a_D,\overline{c}=0}_{s}=\overline{l}_s,D^{a=a_D,\overline{c}=0}_{s}=0) \nonumber \\
& \times \Pr(L^{a=a_Y,\overline{c}=0}_{A_Y, s} = l_{A_Y, s} \mid  D^{a=a_Y,\overline{c}=0}_{s}=0, \overline{L}^{a=a_Y,\overline{c}=0}_{s-1}=\overline{l}_{s-1}, L^{a=a_Y,\overline{c}=0}_{A_D, s}=l_{A_D, s} ) \nonumber \\
& \times \Pr(L^{a=a_D,\overline{c}=0}_{A_D, s}=l_{A_D, s} \mid D^{a=a_D,\overline{c}=0}_{s}=0, \overline{L}^{a=a_D,\overline{c}=0}_{s-1}=\overline{l}_{s-1} ) . \nonumber \\ \label{eq: joint L,D}
\end{align}

\underline{Remark on last equality}
\\
* The dismissible component conditions, which are defined as independencies, imply the following equalities of conditional hazards (See Lemma 1, in Stensrud et al \cite{stensrud2019generalized}),
\begin{align*}
 & \Pr(Y^{a_Y,a_D=0,\bar{c}= 0} = 1 \mid   D^{a_Y,a_D=0,\bar{c}= 0}_{k+1} = 0,\bar{L}^{a_Y,a_D=0,\bar{c}= 0}_{k} = \bar{l}_{k}) \\ 
=&\Pr(Y^{a_Y,a_D=1,\bar{c}= 0} = 1 \mid  D^{a_Y,a_D=1,\bar{c}=0}_{k+1} = 0 , \bar{L}^{a_Y,a_D=1,\bar{c}= 0}_{k} = \bar{l}_{k} ), \nonumber \\
 & \Pr(D^{a_Y=0,a_D,\bar{c}= 0}_{k+1} = 1 \mid   D^{a_Y=0,a_D,\bar{c}= 0}_k = 0, \bar{L}^{a_Y=0,a_D,\bar{c}= 0}_{k} = \bar{l}_{k}) \\ 
=&\Pr(D^{a_Y=1,a_D,\bar{c}= 0}_{k+1} = 1 \mid   D^{a_Y=1,a_D,\bar{c}=0}_k = 0 ,  \bar{L}^{a_Y=1,a_D,\bar{c}= 0}_{k} = \bar{l}_{k} ), \nonumber \\ 
   & \Pr(L^{a_Y,a_D=1,\bar{c}= 0}_{A_Y,k+1} = l_{A_Y,k+1} \mid   D^{a_Y,a_D=1,\bar{c}= 0}_{k+1} = 0, 
   \bar{L}^{a_Y,a_D=1,\bar{c}= 0}_{k} = \bar{l}_{k},L^{a_Y,a_D=1,\bar{c}= 0}_{A_D,k+1} = l_{A_D,k+1}) \nonumber \\
   =& \Pr (L^{a_Y,a_D=0,\bar{c}= 0}_{A_Y,k+1} = l_{A_Y,k+1} \mid  D^{a_Y,a_D=0,\bar{c}= 0}_{k+1} = 0, \bar{L}^{a_Y,a_D=0,\bar{c}= 0}_{k} = \bar{l}_{k},L^{a_Y,a_D=0,\bar{c}= 0}_{A_D,k+1} = l_{A_D,k+1}), \nonumber \\
  & \Pr(L^{a_Y=0,a_D,\bar{c}= 0}_{A_D,k+1} = l_{A_D,k+1} \mid   D^{a_Y=0,a_D,\bar{c}= 0}_{k+1} = 0, \bar{L}^{a_Y=0,a_D,\bar{c}= 0}_{k} = \bar{l}_{k}) \\ 
   =& \Pr(L^{a_Y=1,a_D,\bar{c}= 0}_{A_D,k+1} = l_{A_D,k+1} \mid   D^{a_Y=1,a_D,\bar{c}= 0}_{k+1} = 0, \bar{L}^{a_Y=1,a_D,\bar{c}= 0}_{k} = \bar{l}_{k}). \nonumber
\end{align*}

$**$ Note that the modified treatment assumption is defined with respect to $Z_s$, not $L_s$. Thus, we can only apply this condition to $L_s$ such that $L_s \subset Z_k$. Clearly, it is practically sensible to select $L_s$ such that $L_s \subset Z_k$. However, suppose that the investigator selected $L_s$ such that $L'_s \subset L_s$ and $L'_s \not\subset Z_s$ and the dismissible component conditions hold. Then, the last equality still holds because, by definition of $Z_s$, $L'_s$ does not exert effects on either $Y$ or $D_k$, so it can be removed (and re-included) from all the conditioning sets. 
\\ \underline{End of Remark.}
\\

Next consider the term  $\Pr (D^{a_Y,a_D,\overline{c}=0}_ {K+1}=0)$ in the denominator of \eqref{eq: counterfactual rep}.  Invoking probability rules, the dismissible component conditions and the modified treatment assumption, we have:
\begin{align}
\Pr & (D^{a_Y,a_D,\overline{c}=0}_ {K+1}=0) \nonumber \\
= & \sum_{\overline{l}_ {K}} \prod_{s=0}^{K} \Pr( D^{a_Y,a_D,\overline{c}=0}_{s+1}=0 \mid \overline{L}^{a_Y,a_D,\overline{c}=0}_{s} = \overline{l}_s,D^{a_Y,a_D,\overline{c}=0}_{s}=0) \nonumber \\
& \times \Pr(L_{A_Y,s}^{a_Y,a_D,\overline{c}=0}=l_{A_Y,s} \mid  D^{a_Y,a_D,\overline{c}=0}_{s}=0, \overline{L}^{a_Y,a_D,\overline{c}=0}_{s-1} = \overline{l}_{s-1},L_{A_D,s}^{a_Y,a_D,\overline{c}=0} = l_{A_D,s}) \nonumber \\ 
& \times \Pr(L_{A_D,s}^{a_Y,a_D,\overline{c}=0} = l_{A_D,s} \mid D^{a_Y,a_D,\overline{c}=0}_{s}=0, \overline{L}^{a_Y,a_D,\overline{c}=0}_{s-1} = \overline{l}_{s-1} ) \nonumber \\
= & \sum_{\overline{l}_{K}}  \prod_{s=0}^{K} \Pr( D^{a=a_D,\overline{c}=0}_{s+1}=0 \mid \overline{L}^{a=a_D,\overline{c}=0}_{s}=\overline{l}_s=\overline{l}_s,D^{a=a_D,\overline{c}=0}_{s}=0) \nonumber \\
& \times \Pr(L^{a=a_Y,\overline{c}=0}_{A_Y, s} = l_{A_Y, s} \mid  D^{a=a_Y,\overline{c}=0}_{s}=0, \overline{L}^{a=a_Y,\overline{c}=0}_{s-1} = \overline{l}_{s-1}, L^{a=a_Y,\overline{c}=0}_{A_D, s} = l_{A_D, s} ) \nonumber \\
& \times \Pr(L^{a=a_D,\overline{c}=0}_{A_D, s} = l_{A_D, s}   \mid D^{a=a_D,\overline{c}=0}_{s}=0, \overline{L}^{a=a_D,\overline{c}=0}_{s-1} = \overline{l}_{s-1}). 
\label{eq: marginal D}
\end{align}
Note that for $s \geq 0$, invoking exchangeability and positivity,
\begin{align*}
\Pr & ( D^{a=a_D,\overline{c}=0}_{s+1}=0 \mid \overline{L}^{a=a_D,\overline{c}=0}_{s}=\overline{l}_s,D^{a=a_D,\overline{c}=0}_{s}=0) \\
= & \Pr( D^{a=a_D,\overline{c}=0}_{s+1}=0 \mid \overline{L}^{a=a_D,\overline{c}=0}_{s}=\overline{l}_s,D^{a=a_D,\overline{c}=0}_{s}=0, D_{0}=\overline{C}_{0}=0,L_0) \text{ } \\
= & \frac{\Pr( D^{a=a_D,\overline{c}=0}_{s+1}=0,  \overline{L}^{a=a_D,\overline{c}=0}_{s}=\overline{l}_s \mid D_{0}=\overline{C}_{0}=0,L_0, A = a_D) }{\Pr( D^{a=a_D,\overline{c}=0}_{s}=0,  \overline{L}^{a=a_D,\overline{c}=0}_{s}=\overline{l}_s \mid D_{0}=\overline{C}_{0}=0,L_0, A = a_D) } \\ 
= &  \frac{\Pr( D^{a=a_D,\overline{c}=0}_{s+1}=0,  \overline{L}^{a=a_D,\overline{c}=0}_{s}=\overline{l}_s \mid D_{0}=\overline{C}_{1}=0,L_0, A = a_D) }{\Pr( D^{a=a_D,\overline{c}=0}_{s}=0,  \overline{L}^{a=a_D,\overline{c}=0}_{s}=\overline{l}_s \mid D_{0}=\overline{C}_{1}=0,L_0, A = a_D) } \\ 
= & \Pr ( D^{a=a_D,\overline{c}=0}_{s+1}=0 \mid  \overline{L}^{a=a_D,\overline{c}=0}_{s}=\overline{l}_s, D^{a=a_D,\overline{c}=0}_{s}=0, D_{0}=\overline{C}_{1}=0,L_0=l_0, A = a_D)  \\
& \text{ Using consistency and positivity:} \\
= & \Pr ( D^{a=a_D,\overline{c}=0}_{s+1}=0 \mid  \overline{L}^{a=a_D,\overline{c}=0}_{s}=\overline{l}_s, D^{a=a_D,\overline{c}=0}_{s}=0, D_{1}=\overline{C}_{1}=0,\overline{L}_1=\overline{l}_1, A = a_D), \\
\end{align*}
and arguing iteratively using exchangeability, consistency and the positivity conditions, we find that
\begin{align}
\Pr & ( D^{a=a_D,\overline{c}=0}_{s+1}=0 \mid \overline{L}^{a=a_D,\overline{c}=0}_{s}=\overline{l}_s,D^{a=a_D,\overline{c}=0}_{s}=0) \nonumber \\
= & \Pr ( D_{s+1}=0 \mid \overline{L}_{s}=\overline{l}_{s},D_{s}=C_{s}=0, A = a_D) ,
\label{eq: counter to obs D}
\end{align}
and an analogous argument shows that
\begin{align}
\Pr & ( L^{a = a_Y,\overline{c}=0}_{A_Y,s+1}=l_{A_Y,s+1} \mid L^{a = a_Y,\overline{c}=0}_{A_D,s+1}=l_{A_D,s+1}, \overline{L}^{a = a_Y,\overline{c}=0}_{s}=\overline{l}_s,D^{a = a_Y,\overline{c}=0}_{s+1}=0) \nonumber \\
= & \Pr ( L_{A_Y,s+1}=l_{A_Y,s+1} \mid  L_{A_D,s+1}= l_{A_D,s+1}, \overline{L}_{s}=\overline{l}_{s},D_{s+1}=C_{s}=0, A =a_Y),
\label{eq: counter to obs Ly}
\end{align}
and 
\begin{align}
\Pr & ( L^{a = a_D,\overline{c}=0}_{A_D,s+1}=l_{A_D,s+1} \mid \overline{L}^{a = a_D,\overline{c}=0}_{s}=\overline{l}_s,D^{a = a_D,\overline{c}=0}_{s+1}=0) \nonumber \\
= & \Pr ( L_{A_D,s+1}=l_{A_D,s+1} \mid \overline{L}_{s} = \overline{l}_{s},D_{s+1}=C_{s}=0, A =a_D). 
\label{eq: counter to obs Ld}
\end{align}

Similarly, 
\begin{align*}
    &\mathbb{E}  ( Y^{a_Y,a_D,\overline{c}=0} \mid 
D^{a_Y,a_D,\overline{c}=0}_ {K+1}=0,  \overline{L}^{a_Y,a_D,\overline{c}=0}_{K}=\overline{l}_{K} ) \\
=& \mathbb{E}  ( Y^{a_Y,a_D,\overline{c}=0} \mid 
D^{a_Y,a_D,\overline{c}=0}_ {K+1}=0,  \overline{L}^{a_Y,a_D,\overline{c}=0}_{K}=\overline{l}_{K}, L_0=l_0, C_0=D_0=0, A=a_Y ) \\
=& \mathbb{E}  ( Y^{a=a_Y,\overline{c}=0} \mid 
D^{a=a_Y,\overline{c}=0}_ {K+1}=0,  \overline{L}^{a=a_Y,\overline{c}=0}_{K}=\overline{l}_{K}, L_0=l_0, C_0=D_0=0, A=a_Y ) \\
= & \sum_{y} y \Pr( Y^{a=a_Y,\overline{c}=0} =y \mid D^{a=a_Y,\overline{c}=0}_ {K+1}=0,  \overline{L}^{a=a_Y,\overline{c}=0}_{K}=\overline{l}_{K}, L_0=l_0, C_0=D_0=0, A=a_Y) ,
\end{align*}
and we can then argue iteratively using exchangeability, consistency and positivity, to find that 
\begin{align*}
& \Pr( Y^{a=a_Y,\overline{c}=0} =y \mid D^{a=a_Y,\overline{c}=0}_ {K+1}=0,  \overline{L}^{a=a_Y,\overline{c}=0}_{K}=\overline{l}_{K}, L_0=l_0, C_0=D_0=0, A=a_Y)  \\ 
= & \Pr( Y =y \mid D_{k+1}=C_{k+1}=0 ,  \overline{L}_{K}=\overline{l}_{K}, A=a_Y).
\end{align*}
Hence, 
\begin{align}
 &\mathbb{E}  ( Y^{a_Y,a_D,\overline{c}=0} \mid 
D^{a_Y,a_D,\overline{c}=0}_ {K+1}=0,  \overline{L}^{a_Y,a_D,\overline{c}=0}_{K}=\overline{l}_{K} ) \nonumber \\
= & \mathbb{E}  ( Y \mid D_{K+1}=C_{K+1}=0,  \overline{L}_{K}=\overline{l}_{K}, A=a_Y).
\label{eq: counter to obs Y}
\end{align}
The proof is completed by substituting the counterfactual terms in \eqref{eq: counterfactual rep} with the observed terms \eqref{eq: counter to obs D}-\eqref{eq: counter to obs Y}, using expressions \eqref{eq: joint L,D} and \eqref{eq: marginal D}.
\end{proof}

The identification formula in Theorem \ref{theorem g formula} also follows from the general algorithm in Malinsky et al \cite{malinsky2019potential}.

\subsection{Alternative representation of the identification formula}
\label{app sec: alternative id formula proof}

\begin{theorem}
\label{theorem ipw representation}
An algebraically equivalent version of the identification formula \eqref{eq: id g-formula} in the presence of censoring is
\begin{align}
 & \frac{\mathbb{E}  [ W_{C,K}(a_Y) W_{D,K}(a_Y,a_D)  W_{L_{A_D},K}(a_Y,a_D)  (1-D_{K+1}) Y \mid A=a_Y]}{\mathbb{E}  [ W_{C,K}(a_Y) W_{D,K}(a_Y,a_D)  W_{L_{A_D},K}(a_Y,a_D)  (1-D_{K+1}) \mid A=a_Y]},
 \label{eq: alternative id formula 1}
\end{align}
where 
\begin{align*}
W_{D,K} (a_Y,a_D) &= \frac{\prod_{j=0}^{k}  \Pr(D_{j+1}=0 \mid C_{j+1}=D_{j}= 0, \overline{L}_{j},  A = a_D) }{ \prod_{j=0}^{k} \Pr(D_{j+1}=0 \mid C_{j+1}=D_{j}= 0, \overline{L}_{j},  A = a_Y) }, \\
 W_{L_{A_D},k} (a_Y,a_D) &= \frac{\prod_{j=0}^{k}   \Pr(L_{A_D,j} = l_{A_D,j} \mid C_{j}=  D_{j} = 0, \overline{L}_{j-1},  A = a_D) }{ \prod_{j=0}^{k}   \Pr(L_{A_D,j} = l_{A_D,j} \mid C_{j}= D_{j} = 0, \overline{L}_{j-1},  A = a_Y) }, \\
W_{C,K} (a_D) &= \frac{I(C_{k+1} =0) }{ \prod_{j=0}^{k}  \Pr(C_{j+1}=0 \mid C_{j}=D_{j}=0, \overline{L}_{j},  A = a_D) }.
\end{align*}
Selecting the partition $L_{A_Y,k+1}=L_{k+1}$ and $L_{A_D,k+1}=\emptyset$, consistent with identification under $A_Y$ partial isolation (see Stensrud et al. \cite[Appendix C]{stensrud2019generalized}), the weights in formula \eqref{eq: alternative id formula 1} reduce to $W_{1,K}(a_Y,a_D)$ defined in Section \ref{sec: estimation} of the main text in the absence of censoring. 
\label{theorem: weighted id formula 1}
\end{theorem}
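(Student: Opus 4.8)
The plan is to establish a purely algebraic identity: for a fixed observed-data law, I will expand the two expectations in \eqref{eq: alternative id formula 1} into sums over observed-data densities and show that they reproduce, term by term, the numerator and denominator of the g-formula \eqref{eq: id g-formula}. No causal assumption beyond those already used to derive \eqref{eq: id g-formula} in Theorem \ref{theorem g formula} is needed here; the content is a change-of-measure (importance-weighting) computation in which each weight swaps one factor of the observed likelihood under $A=a_Y$ for the corresponding factor under $A=a_D$, while the inverse-probability-of-censoring factor $W_{C,K}(a_Y)$ strips out the censoring hazards.

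First I would treat the numerator. Writing $\mathbb{E}[W_{C,K}(a_Y)W_{D,K}(a_Y,a_D)W_{L_{A_D},K}(a_Y,a_D)(1-D_{K+1})Y\mid A=a_Y]$ as a sum over complete histories, the indicator $I(C_{K+1}=0)$ inside $W_{C,K}(a_Y)$ restricts the sum to uncensored histories. For such histories I factor the observed joint density given $A=a_Y$ by the chain rule, respecting the temporal order $(C_{k+1},D_{k+1},L_{k+1})$, into a product of censoring-survival, death-survival, covariate, and outcome conditionals, all evaluated at $A=a_Y$. Multiplying by the weights then cancels these factors against the weights' denominators and inserts the $a_D$-numerators: $W_{C,K}(a_Y)$ cancels $\prod_j \Pr(C_{j+1}=0\mid\ldots,a_Y)$; $W_{D,K}(a_Y,a_D)$ replaces each $\Pr(D_{j+1}=0\mid\ldots,a_Y)$ by $\Pr(D_{j+1}=0\mid\ldots,a_D)$; and $W_{L_{A_D},K}(a_Y,a_D)$ replaces each $\Pr(L_{A_D,j}\mid\ldots,a_Y)$ by $\Pr(L_{A_D,j}\mid\ldots,a_D)$, while the $L_{A_Y,j}$ conditionals stay under $a_Y$. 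Finally $\sum_y y\,\Pr(Y=y\mid D_{K+1}=C_{K+1}=0,\overline{L}_K,A=a_Y)$ reconstitutes $\mathbb{E}(Y\mid D_{K+1}=C_{K+1}=0,\overline{L}_K,A=a_Y)$. Summing over $\overline{l}_K$ yields exactly the numerator of \eqref{eq: id g-formula}.

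The denominator is the same computation with the $Y$ factor removed: expanding $\mathbb{E}[W_{C,K}(a_Y)W_{D,K}(a_Y,a_D)W_{L_{A_D},K}(a_Y,a_D)(1-D_{K+1})\mid A=a_Y]$ identically produces $\sum_{\overline{l}_K}\prod_s[\cdots]$, which is precisely $\Pr(D^{a_Y,a_D,\overline{c}=0}_{K+1}=0)$ as written in \eqref{eq: id g-formula}. Taking the ratio gives the claimed equality of \eqref{eq: alternative id formula 1} and \eqref{eq: id g-formula}. The reduction statement then follows by substituting the partition $L_{A_Y,k+1}=L_{k+1}$, $L_{A_D,k+1}=\emptyset$ into the weight definitions and setting $W_{C,K}(a_Y)=1$ in the absence of censoring: the surviving product collapses to a death-hazard ratio times a covariate-density ratio, which is exactly the factorization of the main-text weight $W(a_Y,a_D)$ of \eqref{eq: weight w_1} recorded in Section \ref{sec: estimation}.

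The main obstacle I anticipate is the censoring bookkeeping. The delicate points are (i) verifying that $W_{C,K}(a_Y)$, together with $I(C_{K+1}=0)$, exactly cancels the censoring-survival factors so that the resulting sum runs over the would-be-uncensored history distribution, and (ii) matching the conditioning sets dictated by the order $(C_{k+1},D_{k+1},L_{k+1})$, so that the $C_{s+1}=0$ and $C_s=0$ events appearing in the death and covariate conditionals of \eqref{eq: id g-formula} line up with those in the weights. Once the telescoping of the chain-rule factorization is checked term by term, the remaining manipulations are routine.
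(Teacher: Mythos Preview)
Your proposal is correct and follows essentially the same route as the paper's own proof: expand the weighted expectation as a sum over histories, restrict via the indicators, factor the joint law under $A=a_Y$ by the chain rule in the order $(C_{k+1},D_{k+1},L_{k+1})$, and then use each weight to cancel or swap the corresponding factor, recovering the numerator and denominator of \eqref{eq: id g-formula}. The paper carries out exactly this telescoping (introducing an auxiliary $W'_{C,k}$ without the indicator to make the cancellations explicit), and the censoring bookkeeping you flag is indeed the only place requiring care.
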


\begin{proof}

Define
\begin{align*}
    W'_{C,k}(a) = \frac{1 }{ \prod_{j=0}^{k}  \Pr(C_{j+1}=0 \mid C_{j}=D_{j}=0, \overline{L}_{j},  A = a) }.
\end{align*}
We have
\footnotesize{
\begin{align*}
& \mathbb{E}  [ W_{C,K}(a_Y) W_{D,K}(a_Y,a_D)  W_{L_{A_D},K}(a_Y,a_D)  (1-D_{K+1}) Y \mid A=a_Y] \\ 
= & \sum_{y} \sum_{\overline{l}_k} \sum_{\overline{d}_{k+1}}\sum_{\overline{c}_{k+1}} [ \Pr(Y=y, d_{k+1},c_{k+1},\overline{L}_{k}=\overline{l}_{k} \mid A=a_Y) W'_{C,k}(a) W_{D,k}(a_Y,a_D)  W_{L_{A_D},k}(a_Y,a_D) \\
    & \times y (1-d_{k+1}) (1-c_{k+1}) ]  \\ 
=& \sum_{y} \sum_{\overline{l}_k} y  [\Pr(Y=y,\overline{L}_{k}=\overline{l}_{k}, D_{k+1}=C_{k+1}=0\mid A=a_Y) W'_{C,k}(a_Y) W_{D,k}(a_Y,a_D)  W_{L_{A_D},k}(a_Y,a_D)]  \\
=& \sum_{y}\sum_{\overline{l}_k}  y[ \Pr(Y=y \mid D_{k+1}=C_{k+1}=0,\overline{L}_{k}=\overline{l}_{k}, A=a_Y) \\
& \times \Pr( D_{k+1}=0 \mid \overline{C}_{k+1}= \overline{D}_k=0,\overline{L}_{k}=\overline{l}_{k},A=a_Y )    \\
    & \times \Pr(C_{k+1}= 0 \mid \overline{D}_k=\overline{C}_{k} =0,\overline{L}_{k}=\overline{l}_{k} , A=a_Y) \Pr(\overline{L}_{k}=\overline{l}_{k} \mid \overline{C}_{k}= \overline{D}_k=0,  A=a_Y)    \\
    & \times \Pr(\overline{D}_{k}=\overline{C}_{k}=0 \mid  A=a_Y  ) \\
    & \times W'_{C,k}(a_Y) W_{D,k}(a_Y,a_D)  W_{L_{A_D},k}(a_Y,a_D)]  \\ 
=& \sum_{y} \sum_{\overline{l}_k} y [ \Pr(Y=y \mid D_{k+1}=C_{k+1}=0,\overline{L}_{k}=\overline{l}_{k}, A=a_Y)  \\
& \times \Pr( D_{k+1}=0 \mid \overline{C}_{k+1}= \overline{D}_k=0,\overline{L}_{k}=\overline{l}_{k},A=a_Y )    \\
    & \times \Pr(C_{k+1}= 0 \mid \overline{D}_k=\overline{C}_{k} =0,\overline{L}_{k}=\overline{l}_{k}, A=a_Y) \Pr( L_k=l_{k} \mid \overline{C}_{k}= \overline{D}_k=0,\overline{L}_{k-1}=\overline{l}_{k-1},  A=a_Y)     \\
    & \times \Pr(\overline{D}_{k}=\overline{C}_{k}=0, \overline{L}_{k-1}=\overline{l}_{k-1} \mid  A=a_Y  ) \\
    & \times W'_{C,k}(a_Y) W_{D,k}(a_Y,a_D)  W_{L_{A_D},k}(a_Y,a_D)].  \\ 
\end{align*}
}
\normalsize
We use laws of probability to express $ \Pr(\overline{D}_{k}=\overline{C}_{k}=0, \overline{L}_{k-1}=\overline{l}_{k-1} \mid  A=a_Y  )$ as 
\begin{align*}
   &  \Pr(D_{k}=0 \mid C_{k}=D_{k-1}= 0, \overline{L}_{k-1}=\overline{l}_{k-1},  A = a_Y)  \nonumber \\
    & \times \Pr(C_{k}= 0 \mid D_{k-1} =  C_{k-1} =0,\overline{L}_{k-1}=\overline{l}_{k-1}, A = a_Y)  \\
   & \times \Pr(L_{k-1} = l_{k-1} \mid C_{k-1}=D_{k-1}= 0,\overline{L}_{k-2}=\overline{l}_{k-2}, A = a_Y)  \\
    & \times \Pr(\overline{D}_{k-1}=0, \overline{L}_{k-2}=\overline{l}_{k-2}, \overline{C}_{k-1}=0 \mid  A= a_Y ). \\
\end{align*}

Arguing iteratively we have
\begin{align*}
    \mathbb{E}   [ W_{C,k} & (a_Y)  W_{D,k}(a_Y,a_D)  W_{L_{A_D},k}(a_Y,a_D) Y (1-D_{k+1}) \mid A=a_Y ] \\ 
        =  \sum_{y} \sum_{\overline{l_k}} & y \Big[ \Pr(Y=y \mid  D_{k+1}=C_{k+1}=0 ,\overline{L}_{k}=\overline{l}_{k}, A=a_Y) \\
         \prod_{j=0}^{k} &  \big[ \Pr(D_{j+1}=0 \mid C_{j+1}=D_{j}= 0, \overline{L}_{j}=\overline{l}_{j},  A = a_Y)  \nonumber \\
      & \times \Pr(C_{j+1}= 0 \mid \overline{D}_j=  \overline{C}_{j} =0,\overline{L}_{j}=\overline{l}_{j},  A=a_Y)  \\
   & \times \Pr(L_{j}=l_{j} \mid C_{j}=D_{j}=0,\overline{L}_{j-1}=\overline{l}_{j-1}, A = a_Y) ]  \\
     \times  & W'_{C,k}(a_Y) W_{D,k}(a_Y,a_D)  W_{L_{A_D},k}(a_Y,a_D) \Big]  \\
        = \sum_{y} \sum_{\overline{l_k}} & y  \Big[ \Pr(Y=y \mid  D_{k+1}=C_{k+1}=0 ,\overline{L}_k=\overline{l}_k, A=a_Y) \\
        \prod_{j=0}^{k} &  \big[ \Pr(D_{j+1}=0 \mid C_{j+1}=D_{j}=0, \overline{L}_{j}=\overline{l}_{j},  A = a_Y)  \nonumber \\
       &  \times \Pr(C_{j+1}= 0 \mid \overline{D}_j= \overline{C}_{j} =0,\overline{L}_{j}=\overline{l}_{j},  A=a_Y)  \\
    &\times \Pr(L_{A_Y,j} = l_{A_Y,j} \mid C_{j}= D_{j} = 0, \overline{L}_{j-1}=\overline{l}_{j-1}, L_{A_D,j}=l_{A_D,j}, A = a_Y) \nonumber \\
    &\times \Pr(L_{A_D,j} = l_{A_D,j} \mid C_{j}= D_{j} = 0, \overline{L}_{j-1}=\overline{l}_{j-1},  A = a_Y) \big] \\
     \times & W'_{C,k}(a_Y) W_{D,k}(a_Y,a_D)  W_{L_{A_D},k}(a_Y,a_D) \Big], \\
\end{align*}
where we used that $L_k = (L_{A_Y,j},L_{A_D,j})$ in the second equality. 

By plugging in the expressions for the weights $W'_{C,k}(a_Y)$, $W_{D,k}(a_Y,a_D)$ and $  W_{L_{A_D},k}(a_Y,a_D)$ we obtain
\begin{align*}
= \sum_{y} \sum_{\overline{l_k}} & y \Big[ \Pr(Y=y \mid  D_{k+1}=C_{k+1}=0 ,\overline{L}_k=\overline{l}_k, A=a_Y) \\
    \prod_{j=0}^{k} &  \big[ \Pr(D_{j+1}=0 \mid C_{j+1}=D_{j}=0, \overline{L}_{j}=\overline{l}_{j},  A = a_D)  \nonumber \\
   &\times \Pr(L_{A_Y,j} = l_{A_Y,j} \mid C_{j}= D_{j} = 0, \overline{L}_{j-1}=\overline{l}_{j-1}, L_{A_D,j}=l_{A_D,j},
   A = a_Y) \nonumber \\
  &\times \Pr(L_{A_D,j} = l_{A_D,j} \mid C_{j}= D_{j} = 0, \overline{L}_{j-1}=\overline{l}_{j-1},  A = a_D) \big] \Big]. \\
\end{align*}

We can use analogous steps to find an expression for the denominator, 
\begin{align*}
  \mathbb{E}] & [ W_{C,K}(a_Y) W_{D,K}(a_Y,a_D)  W_{L_{A_D},K}(a_Y,a_D)  (1-D_{K+1}) \mid A=a_Y] \\
= \sum_{\overline{l_k}} &  \Big[ \prod_{j=0}^{k}   \big[ \Pr(D_{j+1}=0 \mid C_{j+1}=D_{j}=0, \overline{L}_{j}= \overline{l}_{j},  A = a_D)  \nonumber \\
   &\times \Pr(L_{A_Y,j} = l_{A_Y,j} \mid C_{j}= D_{j} = 0, \overline{L}_{j-1}=\overline{l}_{j-1}, l_{A_D,j},
   A = a_Y) \nonumber \\
  &\times \Pr(L_{A_D,j} = l_{A_D,j} \mid C_{j}= D_{j} = 0, \overline{L}_{j-1}=\overline{l}_{j-1},  A = a_D) \big] \Big], \\
\end{align*}
which completes the proof. 
\end{proof}

\section{Dismissible component conditions imply $A_Y$ partial isolation}
\label{sec: dismissible comp imply Ay partial}
\begin{lemma}
\label{lemma: diss and AY partial}
If the dismissible component conditions \eqref{ass: delta 1}-\eqref{ass: delta 3} hold, then $A_Y$ partial isolation holds. 
\end{lemma}
\begin{proof}
We give a proof by contradiction. Suppose the \eqref{ass: delta 1}-\eqref{ass: delta 3} hold, but $A_Y$ partial isolation fails. Then, if there is a direct arrow $A_Y \rightarrow D_{k+1}$ for any $k \in \{0,  \dots ,K\}$, this arrow would violate \eqref{ass: delta 2}, which is a contradiction. 

Alternatively, $A_Y$ partial isolation can only be violated if there exists a $W$ such that $A_Y \rightarrow W \rightarrow ... \rightarrow D_{k+1}$ for any $k \in \{0,  \dots ,K\}$. However, if $W \subset \overline{L}_{k}$, which implies that $W$ is measured, then \eqref{ass: delta 3} is violated, which is a contradiction. If $W$ is unmeasured, then either $\eqref{ass: delta 2}$ is violated or $\eqref{ass: delta 3}$ is violated, which is a contradiction. 
\end{proof}

\section{Non-parametric influence function and doubly robust estimators}
\label{app sec: influence doubly}
Here we derive the non-parametric influence function for the identification formula \eqref{eq: id formula no cens} from the main text. This function equals  $\mathbb{E}(Y^{a_Y,a_D} \mid D^{a_Y,a_D}_{K+1}=0)$ for $a_Y\neq a_D$ under the modified treatment assumption, $A_Y$ partial isolation, the absence of censoring and the identifying conditions of Section \ref{sec: identifiability conditions}.   


\begin{theorem}
\label{theorem: influence function}
The nonparametric influence function of the estimand \eqref{eq: id formula no cens} under a data generating model $p$ is
\begin{align}
    &  \frac{1}{\mathbb{E}_p(  1-D_{K+1} \mid    A=a_D )} \times \Bigg\{ \frac{I( A=a_D) }{P( A=a_D) } (1-D_{K+1}) \mathbb{E}_p(Y \mid D_{K+1}=0, A=a_Y, \overline{L}_{K}) \nonumber \\  
   + & \frac{I( A=a_Y)}{P( A=a_Y)} \frac{f_{\underline{L}_1,\overline{D}_{K+1} \mid L_0, A}(\underline{L}_{1},0 \mid L_0,a_D)}{f_{\underline{L}_1,\overline{D}_{K+1} \mid L_0, A}(\underline{L}_{1},0 \mid L_0,a_Y)} (1- D_{K+1}) [ Y - \mathbb{E}_p( Y \mid D_{K+1}=0,A, \overline{L}_{K})   ] \nonumber \\ 
     -  & \frac{ I(A=a_D)(1- D_{K+1})}{P( D_{K+1} = 0,    A=a_D)}  \mathbb{E}_p(Y \mid D_{K+1}=0, A=a_Y, \overline{L}_{K})  f_{\underline{L}_1,\overline{D}_{K+1} \mid L_0, A}(\underline{L}_{1},0 \mid L_0,a_D) f_{{L}_{0}}(L_{0})  \Bigg\}. \label{eq: theorem influence function}
\end{align}
\end{theorem}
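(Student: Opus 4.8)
The plan is to treat the right-hand side of \eqref{eq: id formula no cens} as a smooth functional $\nu_{a_Y,a_D}=\Psi(p)$ of the observed-data law $p$ and to read off its unique (hence efficient) gradient by differentiating $\Psi$ along regular one-dimensional parametric submodels. First I would rewrite the functional as a ratio of two \emph{ordinary} expectations. Writing $\mu(\overline{L}_K)\equiv \mathbb{E}(Y\mid D_{K+1}=0,\overline{L}_K,A=a_Y)$ for the outcome regression and using that $A$ is randomized with $A\independent L_0$ and that censoring is absent, the law of total probability gives
\begin{equation*}
\Psi(p)=\frac{\tilde N}{\tilde D},\qquad \tilde N=\mathbb{E}\!\left[I(A=a_D)(1-D_{K+1})\,\mu(\overline{L}_K)\right],\qquad \tilde D=\mathbb{E}\!\left[I(A=a_D)(1-D_{K+1})\right],
\end{equation*}
since $\mathbb{E}[I(A=a_D)(1-D_{K+1})\mu(\overline{L}_K)]=P(A=a_D)\sum_{\overline{l}_K}\mu(\overline{l}_K)f_{\underline{L}_1,\overline{D}_{K+1}\mid L_0,A}(\underline{l}_1,0\mid l_0,a_D)f_{L_0}(l_0)$ equals $P(A=a_D)$ times the numerator of \eqref{eq: id formula no cens}, and $\tilde D=P(A=a_D)\,\mathbb{E}(1-D_{K+1}\mid A=a_D)$. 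This reduces the problem to differentiating a ratio whose numerator carries a single functional nuisance, $\mu$.

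For a submodel $\{p_t\}$ with $p_0=p$ and score $s=\partial_t\log p_t|_{t=0}$ ranging over all bounded mean-zero functions of $O$ (the nonparametric tangent space), I would compute the gradient of each piece and combine by the quotient rule $\varphi=\tilde D^{-1}(\varphi_{\tilde N}-\nu_{a_Y,a_D}\,\varphi_{\tilde D})$. The denominator is an expectation of an observed function, so $\varphi_{\tilde D}=I(A=a_D)(1-D_{K+1})-\tilde D$ is immediate. For $\varphi_{\tilde N}$ I would split $\partial_t\tilde N(p_t)|_{0}$ into the contribution from varying the outer law with $\mu$ held fixed, which yields the plug-in piece $I(A=a_D)(1-D_{K+1})\mu(\overline{L}_K)-\tilde N$, and the contribution from varying $\mu$ itself.

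The main work—and the main obstacle—is the $\mu$-variation term. Using the standard pathwise derivative of a conditional mean, $\partial_t\mu_t(\overline{l}_K)|_{0}=\mathbb{E}[(Y-\mu(\overline{l}_K))\,s\mid D_{K+1}=0,\overline{L}_K=\overline{l}_K,A=a_Y]$, this term equals $\mathbb{E}[I(A=a_D)(1-D_{K+1})\,\partial_t\mu_t(\overline{L}_K)]$, which is taken in arm $a_D$ while the residual $Y-\mu$ is identified only in arm $a_Y$. I would convert it to the canonical form $\mathbb{E}[\psi_\mu s]$ by a change of measure: multiplying and dividing by $P(A=a_Y,D_{K+1}=0,\overline{L}_K=\overline{l}_K)$ and invoking $A\independent L_0$ collapses the Radon--Nikodym factor to $\tfrac{P(A=a_D)}{P(A=a_Y)}\,\tfrac{f_{\underline{L}_1,\overline{D}_{K+1}\mid L_0,A}(\underline{L}_1,0\mid L_0,a_D)}{f_{\underline{L}_1,\overline{D}_{K+1}\mid L_0,A}(\underline{L}_1,0\mid L_0,a_Y)}$, so that $\psi_\mu=I(A=a_Y)(1-D_{K+1})\tfrac{P(A=a_D)}{P(A=a_Y)}\tfrac{f(\cdots a_D)}{f(\cdots a_Y)}(Y-\mu(\overline{L}_K))$. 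Tracking the two arms and the conditioning event $\{D_{K+1}=0\}$ correctly through this reweighting is the delicate step.

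Finally I would assemble $\varphi=\tilde D^{-1}(\varphi_{\tilde N}-\nu_{a_Y,a_D}\varphi_{\tilde D})$. The terms $-\tilde N+\nu_{a_Y,a_D}\tilde D$ cancel, and substituting $\tilde D=P(A=a_D)\,\mathbb{E}(1-D_{K+1}\mid A=a_D)$ factors out the leading $\mathbb{E}(1-D_{K+1}\mid A=a_D)^{-1}$ exactly as in \eqref{eq: theorem influence function}: the surviving $\tfrac{I(A=a_D)}{P(A=a_D)}(1-D_{K+1})\mu(\overline{L}_K)$ is the regression term, $\tilde D^{-1}\psi_\mu$ is the inverse-probability-weighted residual term, and $-\nu_{a_Y,a_D}\,\tfrac{I(A=a_D)}{P(A=a_D)}(1-D_{K+1})$ reproduces the centering term, once one recognizes that $\nu_{a_Y,a_D}\,\mathbb{E}(1-D_{K+1}\mid A=a_D)=\sum_{\overline{l}_K}\mu(\overline{l}_K)f_{\underline{L}_1,\overline{D}_{K+1}\mid L_0,A}(\underline{l}_1,0\mid l_0,a_D)f_{L_0}(l_0)$. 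I would close by verifying $\mathbb{E}[\varphi]=0$ and noting that, since the model is nonparametric, this gradient is the unique, and therefore efficient, influence function.
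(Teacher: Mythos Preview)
Your proposal is correct and follows essentially the same route as the paper: both write the functional as a ratio $\alpha/\beta$ (you use the marginal versions $\tilde N=P(A=a_D)\alpha$ and $\tilde D=P(A=a_D)\beta$, which is an immaterial reparameterization), apply the quotient rule, split the numerator derivative into the outer-law piece and the $\mu$-variation piece, and handle the latter by a change of measure from arm $a_D$ to arm $a_Y$ that produces the density-ratio weight. One small caveat: your claim that $-\nu_{a_Y,a_D}\tfrac{I(A=a_D)}{P(A=a_D)}(1-D_{K+1})$ ``reproduces'' the third displayed term of \eqref{eq: theorem influence function} is only an expectation-level identification, not a pointwise equality; the paper makes the same move in its last step via an iterated-expectation argument, and the estimating equation it actually uses in Section~\ref{sec: doubly robust} is centered exactly as you have it.
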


\begin{proof}
We follow previous results on nonparametric influence functions \cite{van2003unified,bickel1993efficient, van2002semiparametric, tsiatis2007semiparametric}. Let $\nu^1(p_t)$ denote the nonparametric influence function for the estimand \eqref{eq: id formula no cens} under a law $p_t$, where $t \in [0,1]$ indexes a regular parametric submodel such that $p_0$ is the true data generating model. Note that we can re-express \eqref{eq: id formula no cens} as 
\begin{align*}
   \nu(p_t) & =  \frac{\alpha(p_t)}{\beta(p_t)}  =  \frac{ \mathbb{E}( \mathbb{E}(Y \mid D_{K+1}=0, \overline{L}_{K}, A=a_Y) (1-D_{K+1}) \mid    A=a_D)}{ \mathbb{E}(  1-D_{K+1} \mid    A=a_D )}.
\end{align*}
To derive the influence function for our target parameter  $\nu(p_t) = h(\alpha(p_t),\beta(p_t)) = \frac{\alpha(p_t)}{\beta(p_t)}$, we apply derivation by parts, 
\begin{align*}
    \nu^1(p_t) & = \frac{1}{\beta(p_t)} \alpha^1(p_t) - \frac{\alpha(p_t)}{[\beta(p_t)]^2}\beta^1(p_t) \\
     & = \frac{1}{\beta(p_t)} (\alpha^1(p_t) -  \nu(p_t)  \beta^1(p_t)).
\end{align*} First, 
\footnotesize{
\begin{align*}
    & \frac{d\alpha(p_t)}{dt} \Bigr\rvert_{t = 0} \\
    = &  \frac{d}{dt}  \mathbb{E}_{p_t}( \mathbb{E}_{p_0}\left(Y \mid D_{K+1}=0, A=a_Y, \overline{L}_{K}\right) (1-D_{K+1}) \mid    A=a_D )  \Bigr\rvert_{t = 0} \\
    & +  \mathbb{E}_{p_0}\left[ \frac{d}{dt} \mathbb{E}_{p_t}(Y \mid D_{K+1}=0, A=a_Y, \overline{L}_{K})  \Bigr\rvert_{t = 0}  (1-D_{K+1}) \mid    A=a_D \right]  \\
    = & \mathbb{E}_{p_0}\left[ \mathbb{E}_{p_0}(Y \mid D_{K+1}=0, A=a_Y, \overline{L}_{K}) (1-D_{K+1}) g_{D_{K+1},\overline{L}_{K} \mid A=a_D} \mid    A=a_D \right]   \\
    & +  \mathbb{E}_{p_0}\left[ \mathbb{E}_{p_0}\left[Y g_{Y \mid D_{K+1},\overline{L}_{K}, A=a_Y}  \mid D_{K+1}=0, A=a_Y, \overline{L}_{K}\right]   (1-D_{K+1}) \mid    A=a_D \right]  \\ 
    = & \mathbb{E}_{p_0}( [ \mathbb{E}_{p_0}(Y \mid D_{K+1}=0, A=a_Y, \overline{L}_{K}) (1-D_{K+1}) \\  
    & \quad  - \mathbb{E}_{p_0}(\mathbb{E}_{p_0}(Y \mid D_{K+1}=0, A=a_Y, \overline{L}_{K}) (1-D_{K+1})\mid A = a_D) ] \times g_{D_{K+1},\overline{L}_{K} \mid A=a_D} \mid    A=a_D )  \\
    & +  \mathbb{E}_{p_0}( \mathbb{E}_{p_0}( [ Y - \mathbb{E}_{p_0}(Y \mid D_{K+1}=0, A=a_Y, \overline{L}_{K})   ] g_{Y \mid D_{K+1},\overline{L}_{K}, A=a_Y}  \mid D_{K+1}=0, A=a_Y, \overline{L}_{K})  \\    & \quad \times (1-D_{K+1}) \mid    A=a_D )  \\ 
    = & \mathbb{E}_{p_0}\Bigg[ \frac{I( A=a_D)}{P( A=a_D)} \Bigg\{ \mathbb{E}_{p_0}\left[Y \mid D_{K+1}=0, A=a_Y, \overline{L}_{K}\right] (1-D_{K+1}) \\  
    & \quad  - \mathbb{E}_{p_0}\left[\frac{I( A=a_D)}{P( A=a_D)}\mathbb{E}_{p_0}(Y \mid D_{K+1}=0, A=a_Y, \overline{L}_{K}) (1-D_{K+1})\right] \Bigg\} \times g_{D_{K+1},\overline{L}_{K} \mid A}\Bigg]  \\
    & +  \mathbb{E}_{p_0}\Bigg[ \frac{I( A=a_D)}{P( A=a_D)}  \mathbb{E}_{p_0}( \frac{I( A=a_Y)}{P( A=a_Y \mid D_{K+1}=0, \overline{L}_{K})} [ Y - \mathbb{E}_{p_0}( Y \mid D_{K+1}=0,A, \overline{L}_{K})   ]  \\    
    & \quad \times g_{Y \mid D_{K+1},\overline{L}_{K}, A}  \mid D_{K+1}=0, \overline{L}_{K}) (1-D_{K+1})\Bigg]  \\
    = & \frac{1}{P( A=a_D)} \mathbb{E}_{p_0} \Bigg[ I( A=a_D)  \Bigg\{  \mathbb{E}_{p_0}(Y \mid D_{K+1}=0, A=a_Y, \overline{L}_{K}) (1-D_{K+1}) \\  
     -&  \mathbb{E}_{p_0}\left[\frac{I( A=a_D)}{P( A=a_D)}\mathbb{E}_{p_0}(Y \mid D_{K+1}=0, A=a_Y, \overline{L}_{K}) (1-D_{K+1}) \right] \Bigg\}  \underbrace{ \times (g_{Y \mid D_{K+1},\overline{L}_{K}, A}+ g_{D_{K+1},\overline{L}_{K} \mid A}+ g_{A})}_\text{Diff in exp \{ \} is mean 0 given $A$  }\Bigg]    \\
    +&   \mathbb{E}_{p_0}\Bigg[ \frac{I( A=a_D)}{P( A=a_D)}  \mathbb{E}_{p_0} \Big\{ \frac{I( A=a_Y)(1- D_{K+1})}{P( A=a_Y \mid D_{K+1}=0, \overline{L}_{K})P(D_{K+1}=0 \mid   \overline{L}_{K})} [ Y - \mathbb{E}_{p_0}( Y \mid D_{K+1}=0,A, \overline{L}_{K})   ]  \\    
    & \quad \times g_{Y \mid D_{K+1},\overline{L}_{K}, A}  \mid  \overline{L}_{K}\Big\} (1-D_{K+1})\Bigg] . \\
\end{align*}
}

\normalsize
Hereby all expected values $\mathbb{E} ( \cdot )$ are taken with respect to ${p_0}$, so we omit the subscript.  Consider the last line in the expression above,
\begin{align*}
    &   \mathbb{E}\Bigg[ \frac{I( A=a_D)}{P( A=a_D)}  \mathbb{E}\Big( \frac{I( A=a_Y)(1- D_{K+1})}{P( A=a_Y \mid D_{K+1}=0, \overline{L}_{K})P(D_{K+1}=0 \mid   \overline{L}_{K})}   \\    
    & \quad  [ Y - \mathbb{E}( Y \mid D_{K+1}=0,A, \overline{L}_{K})   ] \times g_{Y \mid D_{K+1},\overline{L}_{K}, A}  \mid  \overline{L}_{K}\Big) (1-D_{K+1})\Bigg]  \\
   = &   \mathbb{E}\Bigg[ \frac{I( A=a_D)}{P( A=a_D)} (1-D_{K+1})  \mathbb{E}\Big( \frac{I( A=a_Y)(1- D_{K+1})}{P( A=a_Y, D_{K+1}=0 \mid   \overline{L}_{K})} [ Y - \mathbb{E}( Y \mid D_{K+1}=0,A, \overline{L}_{K})   ]  \\    
    & \quad \times g_{Y \mid D_{K+1},\overline{L}_{K}, A}  \mid  \overline{L}_{K}\Big) \Bigg]  \\
   = &   \mathbb{E}\Bigg[ \frac{P( A=a_D, D_{K+1}=0 \mid \overline{L}_{K})}{P( A=a_D)} \frac{I( A=a_Y)(1- D_{K+1})}{P( A=a_Y, D_{K+1}=0 \mid   \overline{L}_{K})} [ Y - \mathbb{E}( Y \mid D_{K+1}=0,A, \overline{L}_{K})   ]  \\    
    & \quad \times g_{Y \mid D_{K+1},\overline{L}_{K}, A} \Bigg]  \\
    = &   \mathbb{E}\Bigg[ \frac{P( A=a_D, D_{K+1}=0 \mid \overline{L}_{K})}{P( A=a_D)} \frac{I( A=a_Y)(1- D_{K+1})}{P( A=a_Y, D_{K+1}=0 \mid   \overline{L}_{K})} [ Y - \mathbb{E}( Y \mid D_{K+1},A, \overline{L}_{K})   ]  \\    
    & \quad \underbrace{ \times (g_{Y \mid D_{K+1},\overline{L}_{K}, A}+ g_{D_{K+1},\overline{L}_{K}, A} )  }_\text{Expectation in [ ] is mean 0 given $D_{K+1},\overline{L}_{K}, A$} \Bigg].  \\
\end{align*}

Thus, 
\begin{align*}
& \alpha^1(p_0) \\
= & \frac{I( A=a_D)}{P( A=a_D)}    \Big[  \mathbb{E}\left(Y \mid D_{K+1}=0, A=a_Y, \overline{L}_{K}\right) (1-D_{K+1}) \\  
    & \quad - \mathbb{E} \Big( \frac{I( A=a_D)}{P( A=a_D)}   \mathbb{E}\big(Y \mid D_{K+1}=0, A=a_Y, \overline{L}_{K}\big) (1-D_{K+1}) \Big) \Big] \\
   + &\frac{P( A=a_D, D_{K+1}=0 \mid \overline{L}_{K})}{P( A=a_D)} \frac{I( A=a_Y)(1- D_{K+1})}{P( A=a_Y, D_{K+1}=0 \mid   \overline{L}_{K})} [ Y - \mathbb{E}( Y \mid D_{K+1}=0,A, \overline{L}_{K})   ] .
\end{align*}

We can derive the influence function $\beta^1(p)$ of the denominator  using a similar but simpler argument for $\frac{d\beta(p_t)}{dt} \Bigr\rvert_{t = 0}$ to find that
\begin{align*}
      \beta^1(p_0) = & \frac{ I(A=a_D)}{P(A=a_D)} [ P(D_{K+1}=1 \mid A) -D_{K+1} ] \\
       = & \frac{ I(A=a_D)}{P(A=a_D)} [  (1-D_{K+1}) -  P(D_{K+1}=0 \mid A) ].
\end{align*}

Finally, using the above results and derivation by parts we have
\footnotesize{
\begin{align*}
      \nu^1(p) =  &  \frac{1}{\beta(p)} \times \Bigg\{ \frac{I( A=a_D) }{P( A=a_D)}  \Bigg[  \mathbb{E}_p(Y \mid D_{K+1}=0, A=a_Y, \overline{L}_{K}) (1-D_{K+1}) \\  
    & \quad - \mathbb{E}_p\left[\frac{I( A=a_D)}{P( A=a_D)}\mathbb{E}_p(Y \mid D_{K+1}=0, A=a_Y, \overline{L}_{K}) (1-D_{K+1})\right] \Bigg] \\
   + &\frac{P( A=a_D, D_{K+1}=0 \mid \overline{L}_{K})}{P( A=a_D)} \frac{I( A=a_Y)(1- D_{K+1})}{P( A=a_Y, D_{K+1}=0 \mid   \overline{L}_{K})} [ Y - \mathbb{E}_p( Y \mid D_{K+1}=0,A, \overline{L}_{K})   ]  \\
  -  & \frac{ I(A=a_D)}{P(A=a_D)} [  (1-D_{K+1}) -  P(D_{K+1}=0 \mid A) ] \\
  & \times   \frac{ \mathbb{E}_p( \mathbb{E}_p(Y \mid D_{K+1}=0, A=a_Y, \overline{L}_{K}) (1-D_{K+1}) \mid    A=a_D)}{P( D_{K+1} = 0\mid    A=a_D )} \Bigg\} \\
  =  &  \frac{1}{\beta(p)} \times \Bigg\{   \frac{I( A=a_D) }{P( A=a_D)}  \Bigg[ (1-D_{K+1}) \mathbb{E}_p(Y \mid D_{K+1}=0, A=a_Y, \overline{L}_{K})  \\  
    & \quad - \mathbb{E}_p \left[\frac{I( A=a_D)}{P( A=a_D)} (1-D_{K+1}) \mathbb{E}_p(Y \mid D_{K+1}=0, A=a_Y, \overline{L}_{K}) \right] \Bigg] \\
   + &\frac{P( A=a_D, D_{K+1}=0 \mid \overline{L}_{K})}{P( A=a_D)} \frac{I( A=a_Y)(1- D_{K+1})}{P( A=a_Y, D_{K+1}=0 \mid   \overline{L}_{K})} [ Y - \mathbb{E}_p( Y \mid D_{K+1}=0,A, \overline{L}_{K})   ]  \\
  -  & \frac{ I(A=a_D)}{P(A=a_D)} \Bigg[(1-D_{K+1}) \mathbb{E}_p( \frac{I( A=a_D)(1-D_{K+1}) }{P( A=a_D)P( D_{K+1} = 0\mid    A=a_D)}  \mathbb{E}_p(Y \mid D_{K+1}=0, A=a_Y, \overline{L}_{K}) ) \\
  & -  \mathbb{E}_p\left[ \frac{I( A=a_D)(1-D_{K+1}) }{P( A=a_D)}  \mathbb{E}_p(Y \mid D_{K+1}=0, A=a_Y, \overline{L}_{K})  \right]\Bigg]\Bigg\} \\
    =  &  \frac{1}{\beta(p)} \times \Bigg\{ \frac{I( A=a_D) }{P( A=a_D)} (1-D_{K+1}) \mathbb{E}_p(Y \mid D_{K+1}=0, A=a_Y, \overline{L}_{K})  \\  
   + & \frac{I( A=a_Y)}{P( A=a_Y)} \frac{f_{\underline{L}_1,\overline{D}_{K+1} \mid L_0, A}(\underline{L}_{1},0 \mid L_0,a_D)}{f_{\underline{L}_1,\overline{D}_{K+1} \mid L_0, A}(\underline{L}_{1},0 \mid L_0,a_Y)} (1- D_{K+1}) [ Y - \mathbb{E}_p( Y \mid D_{K+1}=0,A, \overline{L}_{K})   ]  \\ 
  -  & \frac{ I(A=a_D)}{P(A=a_D)}(1- D_{K+1}) \mathbb{E}_p \left[ \frac{I( A=a_D) (1-D_{K+1})}{P( D_{K+1} = 0,    A=a_D)}  \mathbb{E}_p(Y \mid D_{K+1}=0, A=a_Y, \overline{L}_{K}) \right] \Bigg\} \\
  =  &  \frac{1}{\beta(p)} \times \Bigg\{ \frac{I( A=a_D) }{P( A=a_D) } (1-D_{K+1}) \mathbb{E}_p(Y \mid D_{K+1}=0, A=a_Y, \overline{L}_{K})  \\  
   + & \frac{I( A=a_Y)}{P( A=a_Y)} \frac{f_{\underline{L}_1,\overline{D}_{K+1} \mid L_0, A}(\underline{L}_{1},0 \mid L_0,a_D)}{f_{\underline{L}_1,\overline{D}_{K+1} \mid L_0, A}(\underline{L}_{1},0 \mid L_0,a_Y)} (1- D_{K+1}) [ Y - \mathbb{E}_p( Y \mid D_{K+1}=0,A, \overline{L}_{K})   ]  \\ 
     -  & \frac{ I(A=a_D)(1- D_{K+1})}{P( D_{K+1} = 0,    A=a_D)}  \mathbb{E}_p(Y \mid D_{K+1}=0, A=a_Y, \overline{L}_{K})  f_{\underline{L}_1,\overline{D}_{K+1} \mid L_0, A}(\underline{L}_{1},0 \mid L_0,a_D) f_{{L}_{0}}(L_{0})  \Bigg\} ,
\end{align*}
}

\normalsize
where the last line in the last equality follows from iterative expectations because
\begin{align*}
   & \mathbb{E}_p \left[ \frac{I( A=a_D) (1-D_{K+1})}{P( D_{K+1} = 0,    A=a_D)} \mid \overline{L}_{K} \right] \\
  =   & \frac{\mathbb{E}_p [  I( A=a_D) (1-D_{K+1})  \mid \overline{L}_{K} ]  }{P( A=a_D)P( D_{K+1} = 0\mid    A=a_D)}\\
  = &  \frac{f_{\underline{L}_1,\overline{D}_{K+1} \mid L_0, A}(\underline{L}_{1},0 \mid L_0,a_D) }{P( D_{K+1} = 0 \mid    A=a_D) f_{\underline{L}_{1}}(\underline{L}_{1})}. \\
\end{align*}
\end{proof}

\subsection{Constructing a doubly robust estimator}
\label{sec: app doubly rob}
Let superscript $\sim$ denote an estimator. The influence function \eqref{eq: theorem influence function} motivates the following M-estimator $\hat{\nu}_{dr,a_Y,a_D}$, which is a solution to
\begin{align*}
    \sum_{i=1}^{n}\hat{U}_{i}(\nu_{a_Y,a_D})=0,
\end{align*}
where
\footnotesize{
\begin{align*}
    U(\nu_{a_Y,a_D}) & =  \frac{1}{\beta(p)} \times \Bigg\{ \frac{I( A=a_D) }{P( A=a_D)} (1-D_{K+1}) \mathbb{E}(Y \mid D_{K+1}=0, A=a_Y, \overline{L}_{K})  \\  
   + & \frac{I( A=a_Y)}{P( A=a_Y)} \frac{f_{\underline{L}_1,\overline{D}_{K+1} \mid L_0, A}(\underline{L}_{1},0 \mid L_0,a_D)}{f_{\underline{L}_1,\overline{D}_{K+1} \mid L_0, A}(\underline{L}_{1},0 \mid L_0,a_Y)} (1- D_{K+1}) [ Y - \mathbb{E}( Y \mid D_{K+1}=0,A, \overline{L}_{K})   ]   \Bigg\} - \nu_{a_Y,a_D}, 
\end{align*}
}
and, analogously, $\hat{U}(\nu_{a_Y,a_D})$ is defined as above but evaluated under the estimators $\Tilde{\mathbb{E}}(Y \mid D_{K+1}=0, A, \overline{L}_{K})$, $\Tilde{f}_{\underline{L}_1,D_{K+1} \mid L_0, A}(\underline{L}_{1},0 \mid L_0, A)$ and $\Tilde{\beta}(p_t)$. 

Note that the estimator $\hat{\nu}_{dr,a_Y,a_D}$ is equivalent to the one step estimator derived from the influence function $\nu^1_{\Tilde{P}}$, specifically $\hat{\nu}_{dr,a_Y,a_D} = \nu (\Tilde{P}) + \mathbb{P}_n ( \nu^1_{\Tilde{P}}) $ where $\nu (\Tilde{P})$ is the naive plug-in estimator of $\nu_{a_Y,a_D}$ and $ \mathbb{P}_n ( \cdot)$ is the empirical law. This can be seen by the following argument: let $\theta_{\Tilde{P}}$ be the last term of $\nu^1_{\Tilde{P}}$ in \eqref{eq: theorem influence function}, and note that $\mathbb{P}_n ( \theta_{\Tilde{P}} ) = \nu (\Tilde{P})$, that is,
\footnotesize{
\begin{align*}
    & \mathbb{P}_n(    \theta_{\Tilde{P}} )  \\
    = & \mathbb{P}_n\left( \frac{ I(A=a_D)(1- D_{K+1})}{\Tilde{\beta} \Tilde{P}( D_{K+1} = 0,    A=a_D)}  \Tilde{\mathbb{E}}(Y \mid D_{K+1}=0, A=a_Y, \overline{L}_{K})  \tilde{P}_{\underline{L}_1,D_{K+1} \mid L_0, A}(\underline{L}_{1},0 \mid L_0,a_D) \tilde{f}_{{L}_{0}}(L_{0})\right) \\
    = & \mathbb{P}_n\left( \frac{1}{\Tilde{\beta}} \Tilde{\mathbb{E}}(Y \mid D_{K+1}=0, A=a_Y, \overline{L}_{K})  \tilde{P}_{\underline{L}_1,D_{K+1} \mid L_0, A}(\underline{L}_{1},0 \mid L_0,a_D) \tilde{f}_{{L}_{0}}(L_{0})\right), 
\end{align*}
}

\normalsize
where $\Tilde{\beta}(p)$ are $\Tilde{P}( D_{K+1} = 0,    A=a_D)$  are empirical averages. 

Given that $\beta(p)$ can be estimated consistently with the sample average $\Tilde{\beta} = \mathbb{E}_n(1-D_{K+1} \mid A=a_D)$, in the next section we show that $\hat{\nu}_{dr,a_Y,a_D}$ is doubly robust,  in the sense that $\mathbb{E}( \hat{U}(\nu_{a_Y,a_D})) = 0$ if either $\mathbb{E}(Y \mid D_{K+1}=0, A, \overline{L}_{K})$ or $f_{\underline{L}_1,\overline{D}_{K+1} \mid L_0, A}(\underline{L}_{1},0 \mid L_0,a)$ is consistently estimated  by $\tilde{\mathbb{E}}(Y \mid D_{K+1}=0, A, \overline{L}_{K})$ or $\tilde{f}_{\underline{L}_1,D_{K+1} \mid L_0, A}(\underline{L}_{1},0 \mid L_0,a)$ but not necessarily both. We show this in Section \ref{sec: proof doubly robust}.

\subsection{Proof of doubly robustness}
\label{sec: proof doubly robust}
First we express
$\mathbb{E}( \hat{U}(\nu_{a_Y,a_D})  + \nu_{a_Y,a_D}) $

\footnotesize{
\begin{align*}
 &\mathbb{E}( \hat{U}(\nu_{a_Y,a_D}) + \nu_{a_Y,a_D} ) \\
=  &  \frac{1}{\Tilde{\beta}}  \times \Bigg\{ \mathbb{E} [ \frac{I( A=a_D) }{P( A=a_D)} (1-D_{K+1}) \Tilde{\mathbb{E}}(Y \mid D_{K+1}=0, A=a_Y, \overline{L}_{K}) ] \\  %
   + & \mathbb{E} [ \frac{I( A=a_Y)}{P( A=a_Y)} \frac{\Tilde{f}_{\underline{L}_1,D_{K+1} \mid L_0, A}(\underline{L}_{1},0 \mid L_0,a_D)}{\Tilde{f}_{\underline{L}_1,D_{K+1} \mid L_0, A}(\underline{L}_{1},0 \mid L_0,a_Y)} (1- D_{K+1}) [ Y - \Tilde{\mathbb{E}}( Y \mid D_{K+1}=0,A, \overline{L}_{K}) ]  ]   \Bigg\}. \\ %
  =  &  \frac{1}{\Tilde{\beta}}  \times \Bigg\{ \mathbb{E} [\mathbb{E} [ \frac{I( A=a_D) }{P( A=a_D)} (1-D_{K+1}) \Tilde{\mathbb{E}}(Y \mid D_{K+1}=0, A=a_Y, \overline{L}_{K}) \mid \overline{L}_{K} ] ] \\  
   + & \mathbb{E} [ \mathbb{E} [ \frac{I( A=a_Y)}{P( A=a_Y)} \frac{\Tilde{f}_{\underline{L}_1,D_{K+1} \mid L_0, A}(\underline{L}_{1},0 \mid L_0,a_D)}{\Tilde{f}_{\underline{L}_1,D_{K+1} \mid L_0, A}(\underline{L}_{1},0 \mid L_0,a_Y)} (1- D_{K+1}) [ Y - \Tilde{\mathbb{E}}( Y \mid D_{K+1}=0,A, \overline{L}_{K}) ] \mid \overline{L}_{K} ] ]  ]   \Bigg\}. \\ %
=  &  \frac{1}{\Tilde{\beta}}  \times \Bigg\{ \mathbb{E} [ \Tilde{\mathbb{E}}(Y \mid D_{K+1}=0, A=a_Y, \overline{L}_{K}) \mathbb{E}  [ \frac{I( A=a_D) }{P( A=a_D)} (1-D_{K+1}) \mid \overline{L}_{K} ] ] \\  
   + & \mathbb{E} \Big[ \frac{1}{P( A=a_Y)} \frac{\Tilde{f}_{\underline{L}_1,D_{K+1} \mid L_0, A}(\underline{L}_{1},0 \mid L_0,a_D)}{\Tilde{f}_{\underline{L}_1,D_{K+1} \mid L_0, A}(\underline{L}_{1},0 \mid L_0,a_Y)} \{ \mathbb{E} [ I( A=a_Y) (1- D_{K+1}) Y \mid  \overline{L}_{K}] \\
   & \quad - \mathbb{E} [ I( A=a_Y) (1- D_{K+1}) \Tilde{\mathbb{E}}( Y \mid D_{K+1}=0,A, \overline{L}_{K})  \mid  \overline{L}_{K} ] \} \Big]     \Bigg\}. \\ 
   =  &  \frac{1}{\Tilde{\beta}}  \times \Bigg\{ \mathbb{E} [ \Tilde{\mathbb{E}}(Y \mid D_{K+1}=0, A=a_Y, \overline{L}_{K}) \frac{1}{P( A=a_D \mid L_0 )} \mathbb{P}  (D_{K+1}=0,A=a_D \mid \overline{L}_{K} ) ] \\  
   + & \mathbb{E} \Big[ \frac{1}{P( A=a_Y)} \frac{\Tilde{f}_{\underline{L}_1,D_{K+1} \mid L_0, A}(\underline{L}_{1},0 \mid L_0,a_D)}{\Tilde{f}_{\underline{L}_1,D_{K+1} \mid L_0, A}(\underline{L}_{1},0 \mid L_0,a_Y)} \{ \mathbb{E} [ I( A=a_Y) (1- D_{K+1}) \mathbb{E}( Y \mid D_{K+1}=0,A, \overline{L}_{K}) \mid  \overline{L}_{K}] \\ 
   & \quad - \mathbb{E} [ I( A=a_Y) (1- D_{K+1}) \Tilde{\mathbb{E}}( Y \mid D_{K+1}=0,A, \overline{L}_{K})  \mid  \overline{L}_{K} ] \} \Big]     \Bigg\}. \\ 
   =  &  \frac{1}{\Tilde{\beta}}  \times \Bigg\{ \mathbb{E} \left[ \Tilde{\mathbb{E}}(Y \mid D_{K+1}=0, A=a_Y, \overline{L}_{K}) \frac{f_{\underline{L}_1,\overline{D}_{K+1} \mid L_0, A}(\underline{L}_{1},0 \mid L_0,a_D) }{ f_{\underline{L}_{1}}(\underline{L}_{1})} \right] \\  
   + & \mathbb{E} \Bigg[ \frac{\Tilde{f}_{\underline{L}_1,D_{K+1} \mid L_0, A}(\underline{L}_{1},0 \mid L_0,a_D)}{\Tilde{f}_{\underline{L}_1,D_{K+1} \mid L_0, A}(\underline{L}_{1},0 \mid L_0,a_Y)} \\
   & \quad \mathbb{E}\Big[ \frac{I( A=a_Y)}{P( A=a_Y)}  (1- D_{K+1}) \{  \mathbb{E}( Y \mid D_{K+1}=0,A, \overline{L}_{K})-\Tilde{\mathbb{E}}( Y \mid D_{K+1}=0,A, \overline{L}_{K}) \} \mid  \overline{L}_{K} \Big]  \Bigg]     \Bigg\}.
   \\ 
\end{align*}
} 

\normalsize
From the final expression above, we see that the estimator is doubly robust. First, if $\Tilde{\mathbb{E}}( Y \mid D_{K+1}=0,A, \overline{L}_{K})$ is correctly specified but $\Tilde{f}_{\underline{L}_1,D_{K+1} \mid L_0, A}(\underline{L}_{1},0 \mid L_0,a_D)$ is not neccessarily correctly specified, then the second line in the final equality above is 0 and the first line is the estimand of interest, that is,
\footnotesize{
\begin{align*}
   =  &  \frac{1}{\Tilde{\beta}}  \times \Bigg\{ \mathbb{E} \left[ \Tilde{\mathbb{E}}(Y \mid D_{K+1}=0, A=a_Y, \overline{L}_{K}) \frac{f_{\underline{L}_1,\overline{D}_{K+1} \mid L_0, A}(\underline{L}_{1},0 \mid L_0,a_D) }{ f_{\underline{L}_{1}}(\underline{L}_{1})} \right] \\  
   + & \mathbb{E} \Big[ \frac{\Tilde{f}_{\underline{L}_1,D_{K+1} \mid L_0, A}(\underline{L}_{1},0 \mid L_0,a_D)}{\Tilde{f}_{\underline{L}_1,D_{K+1} \mid L_0, A}(\underline{L}_{1},0 \mid L_0,a_Y)} \\
   & \qquad \times  \mathbb{E} [ \frac{I( A=a_Y)}{P( A=a_Y)}  (1- D_{K+1}) \underbrace{ \{  \mathbb{E}( Y \mid D_{K+1}=0,A, \overline{L}_{K})-\Tilde{\mathbb{E}}( Y \mid D_{K+1}=0,A, \overline{L}_{K}) \}}_{=0 \  \forall \overline{l}_{k}  \in \mathcal{\overline{L} }_{k}}  \mid  \overline{L}_{K} ]  \Big]     \Bigg\}. \\ 
    =  &  \frac{1}{\Tilde{\beta}}  \times \sum_{\overline{l}_{k}} \left[ \Tilde{\mathbb{E}}(Y \mid D_{K+1}=0, A=a_Y, \overline{L}_{K}=\overline{l}_{K}) f_{\overline{L}_K,D_{K+1} \mid A}( \overline{l}_{k},0 \mid a_D) \right].\\  
\end{align*}
}
\normalsize
Second, if $\Tilde{f}_{\underline{L}_1,D_{K+1} \mid L_0, A}(\underline{L}_{1},0 \mid L_0,a_D)$ is correctly specified but not necessarily $\Tilde{\mathbb{E}}( Y \mid D_{K+1}=0,A, \overline{L}_{K})$ is correctly specified, then 
\begin{align*}
   =  &  \frac{1}{\Tilde{\beta}}  \times \Bigg\{ \mathbb{E} \Big[ \Tilde{\mathbb{E}}(Y \mid D_{K+1}=0, A=a_Y, \overline{L}_{K}) \frac{f_{\underline{L}_1,\overline{D}_{K+1} \mid L_0, A}(\underline{L}_{1},0 \mid L_0,a_D) }{ f_{\underline{L}_{1}}(\underline{L}_{1})} \Big] \\  
   + & \mathbb{E} \Big[ \frac{\Tilde{f}_{\underline{L}_1,D_{K+1} \mid L_0, A}(\underline{L}_{1},0 \mid L_0,a_D)}{\Tilde{f}_{\underline{L}_1,D_{K+1} \mid L_0, A}(\underline{L}_{1},0 \mid L_0,a_Y)}  \mathbb{E} [ \frac{I( A=a_Y)}{P( A=a_Y \mid L_0)} \\
   & \qquad \times (1- D_{K+1}) \{  \mathbb{E}( Y \mid D_{K+1}=0,A, \overline{L}_{K})-\Tilde{\mathbb{E}}( Y \mid D_{K+1}=0,A, \overline{L}_{K}) \} \mid  \overline{L}_{K} ]  \Big]     \Bigg\}. \\ 
   =  &  \frac{1}{\Tilde{\beta}} \times \Bigg\{   
   \sum_{\overline{l}_{k}} \left[ \Tilde{\mathbb{E}}(Y \mid D_{K+1}=0, A=a_Y, \overline{L}_{K}=\overline{l}_{k}) f_{\overline{L}_K,D_{K+1} \mid A}( \overline{l}_{K},0 \mid a_D) \right] \\  
      + & \mathbb{E} \left[ \frac{\Tilde{f}_{\underline{L}_1,D_{K+1} \mid L_0, A}(\underline{L}_{1},0 \mid L_0,a_D)}{\Tilde{f}_{\underline{L}_1,D_{K+1} \mid L_0, A}(\underline{L}_{1},0 \mid L_0,a_Y)}   \frac{f_{\underline{L}_1,\overline{D}_{K+1} \mid L_0, A}(\underline{L}_{1},0 \mid L_0,a_Y) }{ f_{\underline{L}_{1}}(\underline{L}_{1})} \mathbb{E}( Y \mid D_{K+1}=0,A, \overline{L}_{K})\mid  \overline{L}_{K}   \right]     . \\ 
     & - \mathbb{E} \left[ \frac{\Tilde{f}_{\underline{L}_1,D_{K+1} \mid L_0, A}(\underline{L}_{1},0 \mid L_0,a_D)}{\Tilde{f}_{\underline{L}_1,D_{K+1} \mid L_0, A}(\underline{L}_{1},0 \mid L_0,a_Y)}  \frac{f_{\underline{L}_1,\overline{D}_{K+1} \mid L_0, A}(\underline{L}_{1},0 \mid L_0,a_Y) }{ f_{\underline{L}_{1}}(\underline{L}_{1})} \Tilde{\mathbb{E}}( Y \mid D_{K+1}=0,A, \overline{L}_{K})   \right]     \Bigg\}. \\ 
   =  &  \frac{1}{\Tilde{\beta}}  \times  \Bigg\{  
   \sum_{\overline{l}_{K}} \left[ \Tilde{\mathbb{E}}(Y \mid D_{K+1}=0, A=a_Y, \overline{L}_{K}=\overline{l}_{k}) f_{\overline{L}_K,D_{K+1} \mid A}( \overline{l}_{k},0 \mid a_D) \right] \\  
      + & \sum_{\overline{l}_{k}} \left[ \mathbb{E}(Y \mid D_{K+1}=0, A=a_Y, \overline{L}_{K}=\overline{l}_{k}) \Tilde{f}_{\underline{L}_1,D_{K+1} \mid L_0, A}(\underline{L}_{1},0 \mid L_0, a_D) \right] \\
      & - \sum_{\overline{l}_{k}} \left[ \Tilde{\mathbb{E}}(Y \mid D_{K+1}=0, A=a_Y, \overline{L}_{K}=\overline{l}_{K}) \Tilde{f}_{\underline{L}_1,D_{K+1} \mid L_0, A}(\underline{L}_{1},0 \mid L_0, a_D) \right]    \Bigg\} \\ 
      = &  \frac{1}{\Tilde{\beta}} \sum_{\overline{l}_{k}} \left[ \mathbb{E}(Y \mid D_{K+1}=0, A=a_Y, \overline{L}_{K}=\overline{l}_{K}) \Tilde{f}_{\underline{L}_1,D_{K+1} \mid L_0, A}(\underline{L}_{1},0 \mid L_0, a_D) \right] .
\end{align*} 

\normalsize
\section{Censoring in the data analysis of Section \ref{sec: data example}}
\label{sec: appendix censoring}
In our illustrative data example in Section \ref{sec: data example}, there were no missing values for $(D_{K+1}, \overline{L}_K, A)$, $K=11$, but some subjects had missing values for $Y$; that is, quality of life at 12 months of follow-up was unknown in some individuals. This coincides with the censored data structure of Appendix A for the special case where $\overline{C}_{11} = 0$ and a modified temporal order assumption in interval $K+1$ such that $D_{K+1}$ precedes $C\equiv C_{K+1}$. In this case, the identifying function \eqref{eq: id g-formula} reduces to
\begin{align}
   \nu(p) & =   \frac{ \mathbb{E}_p\left( \mathbb{E}_p\left(\frac{Y(1-C)}{ \Pr(C=0 \mid D_{K+1}=0, \overline{L}_{K},  A = a_Y) } \mid D_{K+1}=0, \overline{L}_{K}, A=a_Y\right) (1-D_{K+1}) \mid    A=a_D\right)}{ \mathbb{E}_p(  1-D_{K+1} \mid    A=a_D )},
   \label{eq: cens id for ex}
\end{align}
consistent with identification under $A_Y$ partial isolation.

To fit the outcome regression estimator $\hat{\nu}_{or,a_Y,a_D} $, we used the estimator described in Section \ref{sec: estimation}, but the outcome regression was restricted to the uncensored observations. 

To fit the weighted estimator $\hat{\nu}_{ipw,a_Y,a_D} $ in the application in Section \ref{sec: data example} the main text, we adjusted for censoring using the censoring weights $W_{C,K}$ from Appendix \ref{app sec: alternative id formula proof} based on the following model for the weight denominator:
\begin{align}
& \text{logit} [\Pr(C=1 \mid D_{12}=0,A, L_0,L_k; \alpha_C)] \nonumber \\
 & = \alpha_{C,0} + \alpha_{C,1}A +\alpha'_{C,2}L_0+\alpha_{C,3}L_{11}. \label{eq: censoring weights} 
\end{align}

In this case, we used a modified doubly robust estimator $\hat{\nu}_{dr,a_Y,a_D}$  based on the nonparametric influence function for \eqref{eq: cens id for ex},
\footnotesize{
\begin{align}
    &  \frac{1}{\mathbb{E}_p(  1-D_{K+1} \mid    A=a_D )} \times \Bigg\{ \frac{I( A=a_D) }{P( A=a_D) } (1-D_{K+1}) \mathbb{E}_p(Y \mid D_{K+1}=C=0, A=a_Y, \overline{L}_{K}) \nonumber \\  
   + & \frac{I( A=a_Y)}{P( A=a_Y)} \frac{f_{\underline{L}_1,\overline{D}_{K+1} \mid L_0, A}(\underline{L}_{1},0 \mid L_0,a_D)}{f_{\underline{L}_1,\overline{D}_{K+1} \mid L_0, A}(\underline{L}_{1},0 \mid L_0,a_Y)}(1-D_{K+1})(1-C) \frac{[ Y - \mathbb{E}_p( Y \mid D_{K+1}=C=0,A, \overline{L}_{K})   ]}{\Pr(C=0 \mid D_{K+1}=0, \overline{L}_{K},  A = a_Y)}   \nonumber \\ 
     -  & \frac{ I(A=a_D)(1- D_{K+1})}{P( D_{K+1} = 0,    A=a_D)}  \mathbb{E}_p(Y \mid D_{K+1}=C=0, A=a_Y, \overline{L}_{K})  f_{\underline{L}_1,\overline{D}_{K+1} \mid L_0, A}(\underline{L}_{1},0 \mid L_0,a_D) f_{{L}_{0}}(L_{0})  \Bigg\}, \nonumber
\end{align}
}

\normalsize
 which is derived using straightforward extensions of the arguments in Appendix \ref{app sec: influence doubly}. This motivates the estimating function
\begin{align*}
    U(\nu_{a_Y,a_D}) & =  \frac{1}{\beta(p)} \times \Bigg\{ \frac{I( A=a_D) }{P( A=a_D)} (1-D_{K+1}) \mathbb{E}(Y \mid D_{K+1}=C=0, A=a_Y, \overline{L}_{K})  \\  
   + & \frac{I( A=a_Y)}{P( A=a_Y)} \frac{f_{\underline{L}_1,\overline{D}_{K+1} \mid L_0, A}(\underline{L}_{1},0 \mid L_0,a_D)}{f_{\underline{L}_1,\overline{D}_{K+1} \mid L_0, A}(\underline{L}_{1},0 \mid L_0,a_Y)} (1- D_{K+1}) (1-C) \\
  & \quad \times \frac{[ Y - \mathbb{E}_p( Y \mid D_{K+1}=C=0,A, \overline{L}_{K})   ]}{\Pr(C=0 \mid D_{K+1}=0, \overline{L}_{K},  A = a_Y)}   \Bigg\} - \nu_{a_Y,a_D}, 
\end{align*}
and, similar to Section \ref{sec: app doubly rob}, we define $\hat{U}(\nu_{a_Y,a_D})$ as above but evaluated under the estimators $\Tilde{\mathbb{E}}(Y \mid D_{K+1}=0, C=0, A, \overline{L}_{K})$, $\Tilde{f}_{\underline{L}_1,D_{K+1} \mid L_0, A}(\underline{L}_{1},0 \mid L_0, A)$, $\Tilde{\Pr}(C=0 \mid D_{K+1}=0, \overline{L}_{K},  A = a_Y)$ and $\Tilde{\beta}(p_t)$.  Let $\hat{\nu}_{dr,a_Y,a_D}$ be the solution to the estimating equation
$ \sum_{i=1}^{n}\hat{U}_{i}(\nu_{a_Y,a_D})=0$. This
estimator is consistent if $\Tilde{\mathbb{E}}( Y \mid D_{K+1}=0,C=0, A, \overline{L}_{K})$  is correctly specified, or if both $\Tilde{f}_{\underline{L}_1,D_{K+1} \mid L_0, A}(\underline{L}_{1},0 \mid L_0,a_D)$ and $\Tilde{\Pr}(C=0 \mid D_{K+1}=0, \overline{L}_{K},  A = a_Y)$ are correctly specified.  We computed this estimator under the model \eqref{eq: censoring weights} above for censoring and the models specified in Section \ref{sec: data example}, but with the outcome model restricted to uncensored individuals.
\end{document}